\definecolor{light-gray}{gray}{0.95}
\newtheorem{thm}{Theorem}[section]
\newtheorem{prop}[thm]{Proposition}
\newtheorem{lemma}[thm]{Lemma}
\newtheorem{cor}[thm]{Corollary}
\newtheorem{assumption}{Assumption}
\theoremstyle{definition}
\newtheorem{defn}[thm]{Definition}
\theoremstyle{remark}
\declaretheoremstyle[
  spaceabove=5pt, spacebelow=5pt,
  headfont=\itshape,
  notefont=\normalfont, notebraces={(}{)},
  bodyfont=\normalfont,
  postheadspace=1em,
  qed=$\Diamond$
]{pluto2}
    \declaretheorem[style=pluto2,name=Remark,    sibling=thm]{rmk}
\renewcommand{\epsilon}{\varepsilon}
\renewcommand{\phi}{\varphi}
\newcommand{\A}{\mathbb{A}}
\newcommand{\B}{\mathbb{B}}
\newcommand{\V}{\mathcal{V}}
\renewcommand{\d}{d^\nabla}
\newcommand{\R}{\mathbb{R}}
\newcommand{\N}{\mathbb{N}}
\newcommand{\Z}{\mathbb{Z}}
\newcommand{\D}{\mathcal{D}'}
\newcommand{\C}{C^\infty}
\newcommand{\FF}{\mathbb{F}}
\newcommand{\F}{\mathcal{F}}
\newcommand{\h}{\hbar}
\newcommand{\sfK}{\mathsf{K}}
\newcommand{\sfL}{\mathsf{L}}
\renewcommand{\L}{\mathcal{L}}
\newcommand{\tr}{\mathrm{tr}^\flat}
\renewcommand{\det}{\mathrm{det}^\flat}
\newcommand{\sdet}{\mathrm{sdet}^\flat}
\newcommand{\str}{\mathrm{str}^\flat}
\newcommand{\pair}[2]{\left\langle #1,\, #2 \right\rangle}
\DeclareMathOperator{\WF}{WF}
\DeclareMathOperator{\supp}{supp}
\DeclareMathOperator{\im}{im}
\title{Perturbative BF theory in axial, Anosov, gauge}
\author{Michele Schiavina}
\address{Department of Mathematics, University of Pavia, Via Ferrata 5, 27100 Pavia, Italy}
    \address{INFN Sezione di Pavia, via Bassi 6, 27100 Pavia, Italy}
    \email{michele.schiavina@unipv.it}
\author{Thomas Stucker}
\address{Department of Mathematics, ETH Zurich, R\"amistrasse 101, 8092, Z\"urich, Switzerland}
\email{thomas.stucker@math.ethz.ch}
\begin{document}

\begin{abstract}
The twisted Ruelle zeta function of a contact, Anosov vector field is shown to be equal, as a meromorphic function of the complex parameter $\hbar\in\mathbb{C}$ and up to a phase, to the partition function of an $\hbar$-linear quadratic perturbation of $BF$ theory, using an ``axial'' gauge fixing condition given by the Anosov vector field. Equivalently, it is also obtained as the expectation value of the same quadratic, $\hbar$-linear, perturbation, within a perturbative quantisation scheme for $BF$ theory, suitably generalised to work when propagators have distributional kernels. 
\end{abstract}

\maketitle

\tableofcontents

\section{Introduction}
Topological (quantum) field theories (TQFT's) are a valuable resource of insight for both pure and applied mathematics, for they are able to package relevant information, in both their classical and quantum formulation. Typically, the field theories ``know'' about the topology of the underlying manifold, e.g.\ through topological invariants as well as group representations and cohomology, and provide numerous deep links between seemingly unrelated areas of mathematics (see \cite{schwarz1,AtiyahTQFT,WittenJonesPoly,Berwick2015} just to name a few). 

An example of this paradigm is given by (Abelian) $BF$ theory \cite{schwarz1,BBRT91,cattaneo1995topological}, a model of great importance in its simplicity, for it serves as a nexus that links to several other scenarios. In certain cases, it can be seen as an example of Chern--Simons theory in three dimensions, thus connecting to knot invariants; it is a particular case of the Poisson sigma model in two dimensions, which in turn is related to Kontsevich's formality and deformation quantisation; and it can be seen as the ``topological backbone'' of theories such as Yang--Mills and even theories of gravity, which can be thought of as perturbations of $BF$ theory.

A first result that exemplifies the importance of TQFT's for pure mathematics, and the main motivator for this work, is derived from the quantisation of Abelian $BF$ theory:  The partition function of the theory (in the ``Lorenz gauge", see below) is the Analytic torsion of Ray and Singer \cite{ray_singer}, as shown by \cite{schwarz1,schwarz2} (see also \cite{cattaneo2017cellular,hadfield_kandel_schiavina} for two alternative approaches).

To properly address the quantisation of Abelian $BF$ theory, and rigorously phrase the previous claim, one needs to address the fact that it is a \emph{gauge} theory, i.e.\ it is invariant under the action of an infinite dimensional group of local symmetries. This degeneracy is nonphysical, and it needs to be ``removed'' in the appropriate way in order to extract meaningful observable (i.e.\ physical) information from the theory. This procedure is called \emph{gauge fixing}, and it is conveniently addressed within a rigorous approach to quantisation of the theory by means of the Batalin--Vilkovisky formalism \cite{BV81,BV83}. 

A \emph{gauge fixing} can be thought of as an arbitrary choice to present the quotient of the space of fields by the group action within the space of fields itself,\footnote{In good cases this is the image of a section of the quotient map, although this is generally too much to ask for.} and it is expected to be an immaterial choice on physical grounds. Note that all mentioned results that link Abelian $BF$ theory to the analytic torsion are obtained by choosing a particular gauge fixing condition that goes under the name of ``Lorenz gauge,'' which depends on the choice of an arbitrary metric on the underlying manifold.

Recently, one of the authors noted how a different choice of gauge fixing condition, which instead depends on the choice of an Anosov-contact structure on an odd-dimensional manifold, leads to the statement that the partition function for Abelian BF theory returns the (absolute value of the) value at zero\footnote{To be precise, its absolute value or the reciprocal thereof, according to the dimension.} of a meromorphic function called (twisted) Ruelle zeta \cite{ruelle}, when it is computed using this gauge fixing condition \cite{hadfield_kandel_schiavina}. This result is immediately tied to a conjecture due to Fried, stating that the value at zero of the (twisted) Ruelle zeta function over a manifold that admits an Anosov vector field should compute exactly the analytic torsion of the manifold \cite{fried,fried1987lefschetz}. The conjecture is true if the quantisation procedure can be shown not to depend on the arbitrary choice of gauge fixing.\footnote{Note however that we can reasonably predict physical quantities to be only \emph{locally constant} on smooth families of gauge fixing conditions, so there is an overarching question related to the existence of a nontrivial moduli space of gauge fixing conditions, even in the best case scenario.}

While in this paper we will not discuss the invariance of the theory on the choice of gauge fixing (which is done in the companion paper \cite{Schiavina_Stucker1}), we will complete the result of \cite{hadfield_kandel_schiavina} by showing that the \emph{whole} Ruelle zeta function, i.e.\ as a meromorphic function on the complex plane, can be given a rigorous field theoretic presentation, either by means of the expectation value of a suitably chosen \emph{quadratic} functional of the fields, or as the partition function of a perturbation of $BF$ theory by means of the same functional. This, together with the result in \cite{hadfield_kandel_schiavina}, completely reconstructs the Ruelle zeta function perturbatively, by means of Feynman diagrams.

In order to do this, despite $BF$ theory being ``simple'' from the perturbative point of view, a significant adaptation of the standard methods used to compute Feynman integrals (e.g.\ those presented in \cite{costello}) becomes necessary. This is due to the simple, but crucial, reason that the operators that appear naturally after imposing the contact, Anosov, gauge fixing condition are not elliptic, and thus their heat kernels are not smooth. 
As a matter of fact, $BF$ theory in this gauge features the operator $\L_X$ acting on differential forms, where $X$ is a contact, Anosov vector field, and perturbative quantisation of the theory only makes sense via a careful manipulation of the wavefront sets of the propagator of the theory. This requires phrasing perturbative quantisation within the setting of microlocal analysis. Hence, besides discussing the particular example of BF theory, we employ a generalisation of the perturbative quantisation techniques to a much larger class of operators.

The paper is structured as follows. In Section \ref{section_perturbation_theory} we provide an introduction to perturbative quantisation of gauge field theories in the BV formalism and set the main definitions and methods we will use in the remainder of the paper.

In section \ref{sec:RuelleZetaFT} we define the geometric setting we will need to define BF theory (Section \ref{BF_theory}) and the twisted Ruelle zeta function (Section \ref{sec:Ruellezeta}). 

The rest of the paper is devoted to recovering the twisted Ruelle zeta function from the quantisation of BF theory. 
To this end, we introduce a functional with which we perturb the action of $BF$ theory at first order in $\hbar$. 

We first show (Section \ref{sec:perturbingfunctional}) that, up to a phase, the full Ruelle zeta function (as a function of $\hbar\in\mathbb{C}$) is the partition function of the perturbed action in the contact gauge (Theorem \ref{thm:Result_PF}). This requires a regularisation scheme for determinants of operators on infinite dimensional vector spaces that is a generalisation of the (more widespread) zeta-funztion regularisation. We call it flat-regularisation, since it uses the flat trace and flat determinant tools.

Then, we change perspective slightly and consider the perturbing functional as an observable, of which we compute the expectation value with respect to free, Abelian, $BF$ theory, in the contact gauge (Section \ref{section_diagrams_ruelle}). This time we obtain that the ratio of Ruelle zeta function by its value at zero can be obtained as the perturbative expectation value of our perturbing functional, thus effectively reconstructing the whole zeta function (Theorem \ref{thm:result_expectation}).

\section*{Acknowledgements}
This work stems from the Master thesis of T. Stucker at ETH Zurich. M.S.\ acknowledges partial support of SwissMAP, since part of this work has been developed while employed by ETH Zurich. We are both grateful to G.\ Felder for helpful discussions and encouragement.

\section{Perturbation Theory and Effective Field Theory}
\label{section_perturbation_theory}

In this section we introduce the relevant notions of classical field theory with local symmetries and discuss its perturbative quantisation in the effective field theory sense. Our framework of choice to handle perturbative quantisation of gauge theories, is the Batalin--Vilkovisky (BV) cohomological framework. We will outline the basics of perturbative quantisation for gauge teories in the BV formalism, starting from a finite-dimensional scenario.

\medskip

\subsection{Finite-dimensional perturbation theory}\label{section_perturbation_theory_details}

Let $V$ be a finite dimensional vector space. The stationary phase formula provides a small $\hbar$ asymptotic expansion of the oscillatory integral
$$\int_V e^{\frac{i}{\h}f(x)}\,dx.$$
We apply this perturbative expansion to a function
$$f(x) = \frac{1}{2}\mathbf{Q}(x,x) + I(x),$$
where $\mathbf{Q}(x,x)$ is a non-degenerate quadratic form on $V$, the free action, and $I(x)$ is a polynomial, called the interaction term, and thought of as a perturbation of the free action. One way to write the stationary phase formula in this case (see, e.g., \cite{mnev}) is
\begin{equation}
\label{stationary_phase}
    \int_V e^{\frac{i}{\h}(\frac{1}{2}\mathbf{Q}(x,x)+I(x))}\,dx \sim \int_V e^{\frac{i}{2\hbar}\mathbf{Q}(x,x)}\,dx \cdot e^{\frac{1}{\h}\Gamma(\mathbf{Q}^{-1},\,I)(0)}\,, \quad \mathrm{as}\quad \h \to 0,
\end{equation}
where the right hand side is interpreted as a formal power series in the parameter $\h$. The term $\Gamma(\mathbf{Q}^{-1},\,I)$ is defined as a sum over connected Feynman diagrams:
\begin{equation}
\label{feynman_expansion}
    \Gamma(\mathbf{Q}^{-1},\,I)(0) = \sum_\gamma \frac{\h^{l(\gamma)}}{|\mathrm{Aut}(\gamma)|}\Phi_\gamma(i\mathbf{Q}^{-1},\{iI_d\})(0),
\end{equation}
where a Feynman diagram $\gamma$ is a graph specified by the following data:
\begin{itemize}
    \item a finite set $V(\gamma)$ of vertices,
    \item a finite set $H(\gamma)$ of half-edges,
    \item a map $i: H(\gamma) \to V(\gamma)$, assigning each half-edge to the vertex, to which it is attached,
    \item an involution $\sigma: H(\gamma) \to H(\gamma)$, whose set of fixed points, $T(\gamma)$, is the set of tails of the diagram, and whose set of two-element orbits, $E(\gamma)$, is the set of edges.
\end{itemize}

A Feynman diagram is \emph{connected} if its underlying graph is connected, and the sum in \eqref{feynman_expansion} is over connected diagrams $\gamma$. Each term is multiplied by a power of $\h$, where the exponent $l(\gamma)$ is the number of loops occurring in the graph $\gamma$. The other prefactor, $|\mathrm{Aut}(\gamma)|$, called the symmetry factor of $\gamma$, is the cardinality of the automorphism group of the graph.

To each Feynman diagram $\gamma$ we associate its weight $\Phi_\gamma(i\mathbf{Q}^{-1},\{iI_d\})$, which is defined as a function on the vector space $V$, depending on the quadratic form $Q$ and the interaction term $I$ of the action functional. 

Let us write $I(x)=\sum_d I_d(x)$, where $I_d$ is a homogeneous polynomial of degree $d$, viewed as elements of the $d$-th tensor power of the dual space: $I_d \in (V^*)^{\otimes d}$. Then, denote by $\mathbf{Q}^{-1}$ the inverse of (the matrix representing) the quadratic form $\mathbf{Q}$. Note that, in finite dimensions, the inverse of $\mathbf{Q}$ can canonically be viewed as an element $\mathbf{Q}^{-1} \in V\otimes V$. This is the \emph{propagator} of the theory.

Given a Feynman diagram $\gamma$ and a vector $v\in V$, the \emph{weight of (the Feynman diagram) $\gamma$} is constructed as follows:
\begin{enumerate}
    \item To each tail in $T(\gamma)$ we associate a factor of $v$ and to each edge in $E(\gamma)$ we associate a factor of $i\mathbf{Q}^{-1}$. Taking the tensor product of the factors associated to tails and edges returns an element of $V^{\otimes |H(\gamma)|}$.
    \item To each vertex of order $d$, i.e.\ a vertex with $d$ half-edges attached to it, we assign a factor of $iI_d$. Taking the tensor product similarly produces an element of $(V^*)^{\otimes |H(\gamma)|}$.
    \item The weight $\Phi_\gamma(i\mathbf{Q}^{-1},\{iI_d\})(v)$ is the contraction of these two elements of the respective tensor power. 
\end{enumerate}

\begin{rmk}
The integral in \eqref{stationary_phase} is not necessarily convergent, even for finite-dimensional $V$. In order for the stationary phase formula to be an actual asymptotic expansion, the integral should be taken with respect to a compactly supported measure. {Although there are ways to make rigorous sense of the formula \eqref{stationary_phase}, we ignore these subtleties here}, since in infinite dimensions we just take Equation \eqref{stationary_phase} as a definition.
\end{rmk}

\begin{rmk}
The factors of the imaginary unit $i$ are introduced since we are considering an oscillatory integral.\footnote{There is a similar formulation for Gaussian integrals without the $i$'s.} The vector $v$ can be viewed as an ``external field'', and in the expression \eqref{stationary_phase} we are setting this external field to zero, which is equivalent to only considering Feynman diagrams with no tails. However, we will need the formulation with the external field shortly, when we consider effective interactions.
\end{rmk}

The idea of perturbation theory is to use the perturbative expansion to assign a value to oscillatory integrals, even when the vector space $V$ is infinite-dimensional. Going back to Equation \eqref{stationary_phase}: from the perturbation theory perspective, one needs to normalize expectation values of interaction polynomials using the free-theory (given by the quadratic form $\mathbf{Q}$) as a benchmark.

In other words, one is interested in the quantity
\begin{equation}
\label{expectation_value_finite}
    \biggl\langle e^{\frac{i}{\hbar}I} \biggr\rangle := \frac{\int_V e^{\frac{i}{\h}(\frac{1}{2}\mathbf{Q}(x,x)+I(x))}\,dx}{\int_V e^{\frac{i}{2\h}\mathbf{Q}(x,x)}\,dx} = e^{\frac{1}{\h}\Gamma(\mathbf{Q}^{-1},\,I)(0)}\,,
\end{equation}
interpreted as the expectation value of $\exp(\frac{i}{\hbar}I)$ with respect to the free theory. Note that this can also be interpreted as the interacting partition function, with the free partition function set to one. Observe furthermore that the right hand side is now expressed solely in terms of the Feynman diagram expansion, which one can hope to make sense of even in infinite dimensions.

\begin{rmk}
In the case where the action functional is quadratic, which is often referred to as a \emph{free theory}, a simplification of the perturbative approach is available: The partition function for such a quadratic functional can formally be interpreted as an oscillatory, Gaussian, integral. In finite dimensions the result of this integral can be expressed in terms of the determinant of the matrix in the quadratic form (up to a phase)
\[
Z=\int_V e^{\frac{i}{2\hbar}\mathbf{Q}(x,x)}dx \propto \mathrm{det}(\mathbf{Q})^{-\frac12}
\]
Given the appropriate generalisation of the determinant to infinite-dimensional operators, which requires a choice of regularisation, one can \emph{define} the partition function of a quadratic functional by extending the oscillatory, Gaussian, integral formula to the infinite-dimensional case. In particular, when we have access to the value of the free partition function one can use the perturbative framework to compute the interacting partition function.  Cf. Remark \ref{phase_factor}.
\end{rmk}

In Appendix \ref{section_flat_det} we will introduce an appropriate notion of determinant for operators on function spaces, called \emph{flat determinant}. Foreshadowing what will come, we phrase the following definition in terms of $\det$. 

\begin{defn}
\label{fresnel}
Let $S: \V \to \R$ be a functional of the form $S(v) = \pair{v}{Av}$, where $\V$ is a (possibly infinite-dimensional) graded vector space endowed with an inner product $\pair{\cdot}{\cdot}$ and $A:\V \to \V$ is a degree-preserving linear operator with well-defined flat determinant. Then we define the value of the oscillatory, Gaussian, ``integral" as
\begin{equation}
    \int_V e^{iS} \doteq  |\sdet(A)|^{-\frac{1}{2}}= \bigl|\prod_k \det(A_k)^{(-1)^k}\bigr|^{-\frac{1}{2}},
\end{equation}
where $A_k=A\vert_{\mathcal{V}^{(k)}}$ acts on homogeneous elements of degree $k$.
\end{defn}
\begin{rmk}
\label{phase_factor}
The integral over summands of odd degree in $\mathcal{V}$ results in a determinant raised to the opposite sign of what is expected from the usual oscillatory, Gaussian, integral, see e.g.\ \cite{mnev}. Compared to an $n$-dimensional oscillatory, Gaussian, integral, we have dropped the constant factor of $(2\pi)^{\frac{n}{2}}$ and also ignored the phase factor $\exp(i\frac{\pi}{4}\mathrm{sign}(A))$, which will not matter for the applications considered in the present paper.
\end{rmk}

\subsection{Effective Field Theory}
\label{EFT}

A local, Lagrangian, field theory is the data of a \emph{space of fields} $\F=\C(M,F)$, modeled on sections of vector bundles,\footnote{We will assume that $M$ is a closed manifold, which ensures $\mathcal{F}$ is equipped with a nuclear Fr\'echet smooth structure. See e.g.\ \cite[Section 30.1, unnumbered Proposition]{kriegl_michor} or \cite[Appendix 2]{costello}.} and a \emph{local action functional} $S: \F \to \R$, of the form\footnote{The infinite jet evaluation is a map $j^\infty\colon\mathcal{F}\times M\to J^\infty F$, such that $j^\infty(\phi,x) = (j^\infty\phi)(x)$, with $j^\infty\phi$ the infinite jet equivalence class of $\phi\in \mathcal{F}$.  See e.g.\  \cite{Anderson:1989,AndersonTorre,BlohmannLFT}.}
\begin{equation}
\label{local_functional}
    S(\phi) = \int_M ((j^\infty)^*\sfL)(\phi)(x),
\end{equation}
where $\sfL\in\Omega^{0,\mathrm{top}}(J^\infty F)$---often called a Lagrangian density---only depends on a finite-order jet of a section $\phi \in \F$.

Since the space of fields is infinite dimensional, a measure-theoretic interpretation of the typical expressions used in quantum theory, like:
\begin{equation}
\label{partition_function}
    Z = ``\int_\F e^{\frac{i}{\hbar}S(\phi)}\,\mathcal{D}\phi \,", \qquad \langle\mathcal{O}\rangle = `` \int_\F \mathcal{O}(\phi) e^{\frac{i}{\hbar}S(\phi)}\mathcal{D}\phi\," ,
\end{equation}
is generally not available.
(Here $\mathcal{O}$ is a functional on $\F$, the parameter $\h$ is formal, and $\mathcal{D}\phi$ is intended to be a measure on $\F$.) However, one can make sense of these expressions perturbatively, i.e.\ as formal power series in $\hbar$, as described in detail in Section \ref{section_perturbation_theory_details}.

While for quadratic functionals, like the free part of a general action, we could generalise the finite dimensional scenario, there still remain problems in evaluating Feynman diagrams in infinite dimensions. We illustrate this with a simple example. Assume that our space of fields is $\F\doteq\C(M)$, the space of smooth functions on some compact manifold $M$, and we are given a free theory defined by the functional
$$S_0(\phi) = \frac{1}{2}\int_M \phi(x) D \phi(x)\,dx,$$
where $D$ is an injective differential operator on $M$. {(In terms of the picture outlined above the quadratic functional $S_0$ corresponds to $\frac12\mathbf{Q}(x,x)$.)} The propagator of the theory is then the Schwartz kernel of $D^{-1}$, which in general will only be a distribution on $M\times M$, denoted\footnote{Here $\mathcal{D}'(N)$ denotes distributional sections of the trivial $\mathbb{R}$-bundle over a manifold $N$.} $P \in \D(M\times M)$. 

Given an interaction term $I(\phi)$, homogeneous of degree $d$, we can view it as an element $I\in\D(M^d)$ and attempt to compute the Feynman diagram expansion $\Gamma(P,I)(\phi)$ for some external field $\phi$. However, we run into problems when trying to apply the Feynman rules described above. Since both the propagator $P$ and the interaction $I$ are distributions, there is, in general, no way to contract them, and thus construct the Feynmann weight $\Phi_\gamma(i\mathbf{Q}^{-1},\{iI_d\})(\varphi)$. 

One way to deal with this problem is through the framework of effective field theory. Here, we follow Costello's formulation of effective quantum field theory \cite{costello}.\footnote{Our exposition differs slightly, since we have a factor of $i$ in the partition function, whereas Costello considers $\exp(\frac{1}{\h}S)$ as the ``integrand" in the partition function.} The idea is to replace the interaction term $I$ with a family of \emph{effective interactions}\footnote{Note that, although it is possible to make sense of smooth functions on the nuclear Fréchet space $\F=\C(M)$, our main example will be that of a bilinear but nonlocal, operator $\F\times \F\to \mathbb{R}$.} 
$$\{I[L]\in\C(\F), L\in\R_+\},$$
with $L$ determining the \emph{length scale}.  The offending propagator $P$ is replaced by its heat kernel regularised counterpart, that is, to $L_1, L_2 \in \R_+$ (interpreted as two length scales) we associate the propagator:
$$P_{L_1,L_2} = \int_{L_1}^{L_2}K_t\,dt,$$
where $K_t$ is the heat kernel for $D$, i.e.\ the Schwartz kernel of the operator $\exp(-tD)$. When $D$ is an elliptic operator with strictly positive spectrum, this propagator is a smooth function, i.e.\ $P_{L_1,L_2} \in \C(M\times M)$ for all $L_1,L_2 > 0$, where we can even set $L_2=\infty$. Thus, $P_{L_1,L_2}$ can be contracted with distributions and the Feynman rules are well-defined for this propagator. 

In order to link together the behaviours at different values of the parameter $L$, the effective interactions $I[L]$ are required to satisfy the renormalisation group equation:
\begin{equation}
\label{RGE}
iI[L_2](\phi) = \Gamma(P_{L_1,L_2},\, I[L_1])(\phi),
\end{equation}
relating the scale-$L_2$ to the scale-$L_1$ interactions. Then, the expectation value in \eqref{expectation_value_finite} can be computed as 
\begin{equation}
\label{expectation_value}
    \biggl\langle e^{\frac{i}{\hbar}I} \biggr\rangle = e^{\frac{1}{\h}\Gamma(P_{L,\infty}, I[L])(0)},
\end{equation}
and the renormalisation group equation implies that this is independent of the length scale-$L$ used in the expression. Once again, this expectation value can also be interpreted as the partition function of the interacting theory, where the undefined free partition function has been set to $1$.

\begin{rmk}
Starting from a local interaction $I$ one can obtain a family of effective interactions by renormalisation. The scale-$L$ interaction is given by 
\begin{equation}
\label{renormalization}
    iI[L](\phi) = \lim_{\epsilon\to 0}\Gamma(P_{\epsilon,L},I)(\phi),
\end{equation}
provided the limit exists. When this is not the case, one subtracts counterterms $I^{CT}(\epsilon)$ from the interaction to make the limit of the renormalised interaction well-defined:
$$iI^R[L](\phi) = \lim_{\epsilon\to 0}\Gamma(P_{\epsilon,L},I-I^{CT}(\epsilon))(\phi).$$
We will not make use of this technique, since the example we will study in Section \ref{section_ruelle_zeta_perturbative} does not require any counterterms for the limit in \eqref{renormalization} to exist.
\end{rmk}

\subsection{Effective Field Theory in the BV Formalism}
\label{EFT_BV}

Perturbative, effective, field theory needs to be amended when dealing with \emph{gauge theories}. These are models that admit local symmetries, i.e.\ a set of (local) vector fields preserving the action functional (possibly up to total derivatives). In this case, the field theoretic description is characterised by redundancies that must be removed, in order to extract sensible information from the model. The removal of such symmetry is called \emph{gauge-fixing}.

In the presence of gauge symmetries, the quadratic form $\mathbf{Q}$ becomes degenerate \cite{henneaux_teitelboim} and the asymptotic expansion \eqref{feynman_expansion} no longer makes sense. As a consequence, for the perturbative methods to be well-defined, one needs to implement gauge fixing in perturbative effective field theory. 

Indeed, the construction outlined above can be combined with the Batalin--Vilkovisky (BV) formalism for gauge field theories \cite{BV81,BV83}. Here, we provide a brief summary of the construction of effective theories within the BV formalism, following \cite{costello2,costello}, and \cite{mnev} (see also \cite{ZinnJustin}).

The BV formalism deals with gauge theories by expanding the original space of fields $\F$ to a graded $(-1)$-symplectic manifold $\F_{BV}$, and replacing the original action functional with a functional $S_{BV}$ on $\F_{BV}$. More precisely:

\begin{defn}[Classical BV Theory]
\label{def:BV_classical}
A classical BV theory is given by the data $(\F_{BV}, \Omega_{BV}, S_{BV}, Q_{BV})$, where
\begin{itemize}
    \item the \emph{space of BV fields} $\F_{BV}$ is a $\Z$-graded manifold,
    \item the \emph{BV} form $\Omega_{BV}$ is a degree $-1$, local, (weak) symplectic form on $\F_{BV}$,
    \item the \emph{classical BV operator} $Q_{BV}$ is a degree $1$, local, vector field on $\F_{BV}$ with the cohomological property $[Q_{BV},Q_{BV}]=0$, 
    \item the \emph{BV action} $S_{BV}$ is a local function on $\F_{BV}$ of degree $0$, satisfying the Hamiltonian condition and the classical master equation: 
    \[
    \iota_{Q_{BV}}\Omega_{BV} = \delta S_{BV}, \qquad \{S_{BV},S_{BV}\}_{\Omega_{BV}} = 0.
    \]
\end{itemize}
Here, $\{\,\cdot\,,\,\cdot\,\}_{\Omega_{BV}}$ is the Poisson bracket on (Hamiltonian) functions induced by the (weak) symplectic structure.

\end{defn}

In our applications, $\F_{BV}$ is the space of sections of some graded vector bundle\footnote{To connect with the classical theory we want $\FF^{(0)}=F$.} $\FF\to M$ over a compact, connected, orientable manifold $M$ without boundary, i.e.\ $\F_{BV} = \C(M,\FF)$.
Since $\F_{BV}$ is a vector space, the graded weak-symplectic structure can be defined directly as a graded anti-symmetric, bilinear, pairing on the space of fields:
$$\Omega_{BV}: \F_{BV}\times\F_{BV} \to \R.$$
The pairing then induces a map $\Omega_{BV}^\flat: \F_{BV} \to \F_{BV}^*$ into the (strong) dual space to the space of fields, and we require this map to be injective. (See \cite[Section 48]{kriegl_michor} for a treatment of infinite dimensional symplectic manifolds.)

In this setting, the notion of Lagrangian subspace requires an appropriate generalisation. We employ the following (cf.\ \cite{weinstein,cattaneo-contreras-split}):

\begin{defn}
\label{def:lagrangian_subspace}
Let $(\V,\Omega)$ be a (weak) symplectic vector space. A subspace $\sfL\subset \V$ is said to be Lagrangian iff it is isotropic, i.e.\ $\Omega\vert_{\sfL}=0$ and there is a splitting $\V = \sfL\oplus\sfK$, where $\sfK$ is also isotropic.
\end{defn}

We write the quadratic, free, part of the (BV) action as 
$$S_0(\phi) = \frac{1}{2}\Omega_{BV}(\phi,Q\phi), \quad \phi\in\F_{BV},$$
where $Q$ is a differential operator on $\F_{BV}$, of degree $1$ with respect to the grading, which squares to zero. Note that $Q^2 = 0$ is equivalent to the classical master equation for $S_0$. 
\begin{rmk}[$Q$ vs. $Q_{BV}$]\label{rmk:QvsQBV}
    Note that $Q$ and $Q_{BV}$ may differ. In full generality, $Q_{BV}$ may contain nonlinear terms, owing to the fact that it is the Hamiltonian vector field of the possibly non-quadratic action $S_{BV}$. A typical example is given by $Q=d$ the de Rham differential, while $Q_{BV}$ may act as $d_A$ for some connection $A$, which could be a field configuration. In the perturbative quantisation setting it may be convenient (but not strictly necessary) to work with $Q$ and treat nonlinear contributions as part of the interaction. In our main application, Abelian BF theory, there is no distinction between $Q$ and $Q_{BV}$.
\end{rmk}

To apply the perturbative framework to a gauge theory in the BV Formalism, as mentioned, we need a gauge fixing condition. 

\begin{defn}\label{def:GFoperator}
A gauge fixing operator is a linear, degree $-1$, local map $Q_{GF}\colon \F_{BV} \to \F_{BV}$ such that $Q_{GF}^2=0$,  and such that $\sfL\doteq \mathrm{im}(Q_{GF})$ is a Lagrangian subspace. Moreover, we require that  
    \[
    D=[Q_{GF},Q]= QQ_{GF} + Q_{GF}Q
    \] 
be a differential operator, and that the Lagrangian splitting $\F=\sfL \oplus \sfK$ be $D$-invariant.
\end{defn}

\begin{rmk}[Comparison with \cite{costello}]
\label{remark_costello_comparison}
In the framework outlined \emph{ibidem}, a gauge fixing operator is instead required to induce a splitting of the form
\begin{equation}
\label{costello_splitting}
    \F_{BV} = \im(Q)\oplus\im(Q_{GF})\oplus\ker(D).
\end{equation}
We will only deal with the case $\ker(D)=\{0\}$, although we will not be able to achieve this splitting in the space of \emph{smooth} sections for the particular gauge-fixing condition we will be interested in (see Remark \ref{rmk:costellospilittinganisotropic}).  
The idea of extracting the kernel of $D$ from $\F_{BV}$ and choosing a Lagrangian splitting of the complement is explored in \cite{cattaneo-mnev-reshetikhin-quantum} as well, where it relates to a choice of ``residual fields,'' on which the resulting effective field theory will depend. We will not pursue this idea further, as it lies beyond the scope of this work. We note here that both $Q$ and $Q_{GF}$ commute with $D$.

Another restrictive condition that is imposed on the free action and the gauge fixing operator by the framework of \cite{costello} is the requirement that $D$ be a generalised Laplacian. Among other things, this guarantees the existence of a smooth heat kernel. We shall shortly dispose of this requirement and work with some alternative regularisation procedures to make sense of the Feynman rules. 
\end{rmk}

When working with a quadratic BV theory, for which we can write $S_{BV} = \frac12 \Omega_{BV}(\varphi, Q_{BV}\varphi)$, we can look at the restriction $S_{BV}\vert_\sfL$ of the quadratic functional onto the gauge-fixing subspace. 
One can then consider the gauge-fixed partition function, as the ``would-be oscillatory, Gaussian, integral'' for the gauge fixed action-functional. Notice that restricting to $\sfL$ involves the Jacobian of the gauge-fixing condition (see, e.g., \cite[Remark 31]{hadfield_kandel_schiavina}). We will work with the following generalisation of the Gaussian-integral interpretation of the partition function:
\begin{defn}\label{def:partitionfunction}
Let $(\F_{BV}, \Omega_{BV}, S_{BV}, Q_{BV})$ be a BV theory with quadratic action functional $S_{BV}=\Omega_{BV}(\varphi,Q_{BV}\varphi)$ and with gauge fixing operator $Q_{GF}$, so that $D=[Q_{GF},Q_{BV}]$ and denote $\im(Q_{GF})=\sfL$. The gauge-fixed partition function is:\footnote{To compute $|\sdet(Q_{GF}\vert_{\sfK})|$ it is convenient to introduce an auxiliary metric, w.r.t. which one can look at the dual operator $Q_{GF}^*$ and compute $|\sdet(Q_{GF}Q_{GF}^*)|^{\frac{1}{2}}$.}
\[
Z(S_{BV},\sfL) = |\sdet(Q_{GF}|_\sfK)|^{\frac{1}{2}} \cdot |\sdet(D\vert_{\sfL})|^{-\frac{1}{2}}.
\]
This approach generalises to BV actions that are not quadratic, for which the decomposition $S_{BV} = S_0 + I$ holds, with $S_0= \frac12\Omega_{BV}(\varphi, Q \varphi)$ as above. The free partition function of the theory is 
\[
Z^{\text{free}}(S_{BV},\sfL) \doteq Z(S_0,\sfL).
\]
\end{defn}

We will define the kernel of an operator using the symplectic pairing $\Omega_{BV}$. To this end, note that the pairing $\Omega_{BV}$ induces a map 
$$1\otimes\Omega_{BV}: \F_{BV}\otimes\F_{BV}\otimes\F_{BV} \to \F_{BV}.$$ 
For $\F_{BV} = \C(M,\FF)$ this can be extended to a map:
$$1\otimes\Omega_{BV}: \D(M\times M, \FF\boxtimes \FF)\otimes\C(M,\FF) \to \D(M,\FF).$$
Given an operator $B$ on $\F_{BV}$, we can then define its kernel $K_B\in\D(M\times M, \FF\boxtimes\FF)$ by
\begin{equation}
\label{BV_kernel}
    K_B * \phi := (-1)^{|B|}(1\otimes\Omega_{BV})(K_B\otimes\phi) = B\phi, \quad \forall\,\phi\in\F_{BV},
\end{equation}
where $|B|$ is the degree of the operator with respect to the grading.
Note that $K_B$ is akin to the Schwartz kernel of $B$, but instead of using integration with respect to a density on $M$ to define the pairing with functions, we use the symplectic structure $\Omega_{BV}$. In this sense, it can be thought of as a generalisation of the Schwarz kernel for symmetric pairings to antisymmetric bilinear forms. See \cite{costello} for further details. In the context of BF theory, introduced in Section \ref{BF_theory}, this definition of the kernel through the symplectic pairing should be compared to Remark \ref{remark_schwartz_kernel}.

\begin{defn}\label{def:propagator}
The heat kernel of $D=[Q_{GF},Q]$ is the distribution $K_t$, such that
\begin{equation}
\label{heat_kernel_costello}
    K_t * \phi = (1\otimes\Omega_{BV})(K_t\otimes\phi) = e^{-tD}\phi.
\end{equation}
The propagator from scale $L_1$ to $L_2$ is
\begin{equation}
\label{propagator_costello}
    P_{L_1,L_2} \doteq \int_{L_1}^{L_2}(Q_{GF}\otimes1)K_t\,dt,
\end{equation}
so that
$$P_{L_1,L_2} * \phi = \int_{L_1}^{L_2} Q_{GF}\,e^{-tD}\phi\,dt = \int_{L_1}^{L_2} e^{-tD}Q_{GF}\phi\,dt.$$
\end{defn}

\begin{rmk}
We can think of $P_{0,\infty}$, as the kernel of the inverse of $Q$ restricted to the gauge fixing subspace, since for $\phi\in\im(Q_{GF})$ we have
$$
P_{0,\infty} * Q\varphi = \int_0^\infty e^{-tD}Q_{GF}Q\phi\,dt = \int_0^\infty e^{-tD}D\phi\,dt = -\int_0^\infty \frac{d}{dt}\left(e^{-tD}\phi\right)\,dt = \phi,$$
assuming that $\exp(-tD)\phi$ vanishes as $t\to\infty$.
\end{rmk}

Using this propagator, we can introduce effective interactions for the BV theory. As above, these are families of functionals $I[L] \in \C(\F_{BV})$, satisfying the renormalisation group equation \eqref{RGE}. (Note that the renormalisation group equation now depends on the choice of gauge fixing through the $Q_{GF}$ entering the propagator, as does the expectation value defined in \eqref{expectation_value}.) 

The family of functionals $\{I[L],L\in\mathbb{R}_+\}$ must satisfy certain consistency conditions in order for the perturbative field theory to make sense. Since these are not going to be relevant to this work, we refer to the definition of an effective BV theory in \cite{costello}. (A version of these consistency conditions adapted to the case at hand is given in \cite{ThomasThesis}.)

As mentioned in Remark \ref{remark_costello_comparison}, the effective field theory framework is often restricted to the case that $D=[Q_{GF},Q]$ is an elliptic operator, e.g.\ \cite{costello} requires $D$ to be a generalized Laplacian. This ensures that the propagator $P_{L_1,L_2}$ of Definition \ref{def:propagator} is a smooth function for all finite non-zero scales $0<L_1,L_2<\infty$. Thus, as explained in Section \ref{EFT}, the evaluation of the weight $\Phi_\gamma(iP_{L_1,L_2},\{iI_d\})(\varphi)$ of a Feynman diagram $\gamma$ involves the contraction of a distribution formed from the interaction terms $\{I_d\}$ with a smooth function formed from the propagator and the external fields. So the weight of any Feynman diagram is well-defined for finite non-zero scales $L_1, L_2$ and it is only the UV and IR limits, $L_1\to 0$ and $L_2\to \infty$, that can pose problems.

In what follows, we will be interested in applying the effective field theory framework to a gauge fixing that gives rise to a non-elliptic operator $D$. This results in a distributional propagator $P_{L_1,L_2}$, and some care must be taken in applying the Feynman rules described in Section \ref{section_perturbation_theory_details}.

To evaluate a Feynman diagram with $N$ half-edges, a distribution $u_I$ on $M^N$ given by a tensor power of interaction functionals must be contracted with a tensor product of propagators and external fields, which is now also a distribution $u_P$ on $M^N$. Such an extended notion of contraction is possible provided that the distributions satisfy a condition on their wavefront set.

\begin{defn}
\label{def_feynman_contraction}
Given a connected Feynman diagram $\gamma$ with $N$ half-edges, let ${u_I \in \D(M^N)}$, ${u_P \in \D(M^N)}$ be the distributions constructed from the interactions, respectively from the propagator and external fields, according to the Feynman rules of Section \ref{section_perturbation_theory_details}. If the wavefront sets of $u_I$ and $u_P$ satisfy
\begin{equation}
\label{WF_condition_Feynman}
    \big(\WF(u_P)+\WF(u_I)\big) \cap \{0_{M^N}\} = \emptyset,
\end{equation}
where $\{0_{M^N}\}$ is the graph of the zero-section in $T^*(M^N)$ and the sum on the left is the fiberwise sum in the cotangent bundle $T^*(M^N)$, then we define the weight of the Feynman diagram to be
\begin{equation}
\label{extended_contraction}
    \Phi_\gamma(iP_{L_1,L_2},\{iI_d\})(\varphi) = \pair{u_I}{u_P} := \pair{u_I u_P}{1}
\end{equation}
\end{defn}
Here $u_I u_P$ is the product of distributions\footnote{If $u_I$ is a distributional section of some vector bundle $F$ and $u_P$ is a distributional section of the dual vector bundle $F^*$, then the product $u_Iu_P$ in \eqref{extended_contraction} involves the fiberwise pairing in the bundles.}, which is well-defined as a distribution on $M^N$ under the wavefront set condition \eqref{WF_condition_Feynman}, see \cite[Theorem 8.2.10]{hormander1}. Note that this extended notion of contraction is the unique continuous extension of the contraction of a distribution with a smooth function, and can be viewed as a universal regularisation procedure. For the case at hand, $u_I$ and $u_P$ are given explicitly in Equation \eqref{expression_uI_uP}.

\begin{rmk}
\label{UV_IR_limits}
One could of course attempt to apply Definition \ref{def_feynman_contraction} directly to the Schwartz kernel of $Q_{GF}D^{-1}$ as distributional propagator. We choose to introduce the regularization procedure of Definition \ref{def_feynman_contraction} within the context of effective field theory, in order to regularize the Feynman diagrams for the propagator $P_{L_1,L_2}$ at finite non-zero scales. In this way the $L_1\to 0$ (UV) and $L_2\to\infty$ (IR) limits, which may still be problematic, can be dealt with separately by further regularization and by renormalization as needed. The advantage of this approach is that, whereas the wavefront set condition \eqref{WF_condition_Feynman} must hold for each diagram individually for the weight to be well-defined, the UV and IR limits are only enforced on the sum over all Feynman diagrams $\Gamma(P_{L_1,L_2},I)(\phi)$ and cancellations may occur.
\end{rmk}

\section{Ruelle Zeta Function in Perturbative Field Theory}\label{sec:RuelleZetaFT}

In this work we will be interested in interpreting certain quantities defined in the world of dynamical systems as the results of perturbative quantisation of a topological field theory called twisted, Abelian, $BF$ theory, formulated in the Batalin--Vilkovisky formalism.

The following topological setup will be used throughout. Let $M$ be an $n$-dimensional compact, connected, orientable, manifold, and let $\pi\colon E\to M$ be a rank-$r$, complex, vector bundle over $M$. We further assume that $E$ is equipped with a smooth Hermitian inner product, which we denote by $\pair{\,\cdot\,}{\cdot\,}_E$, and a flat connection $\nabla$ compatible with the inner product. Note that the flat connection induces a unitary representation of the fundamental group $\rho: \pi_1(M) \to U(\mathbb{C}^r)$. Denote by $\Omega^k(M,E)$ the space of smooth differential $k$-forms on $M$ with values in $E$. Since the connection is flat, the exterior covariant differential associated to $\nabla$,
\[ d^\nabla \colon \Omega^k(M,E) \to \Omega^{k+1}(M,E), \]
satisfies $d^\nabla\! \circ\, d^\nabla = 0$, and thus defines a cochain complex, called twisted de Rham complex, with cohomology groups $H^\bullet(M,E)$. We require this complex to be acyclic, i.e.\ $H^k(M,E) = 0$, for all $0\le k\le n$.
\begin{defn}[Twisted topological data]\label{def:TTD}
We call $(M,E,\nabla,\rho)$ the \emph{twisted topological data}.\footnote{Note that the two pieces of data $\nabla$ and $\rho$ carry essentially the same information. Nevertheless, we find it clearer to include both in the geometric package used throughout.}
\end{defn}

\subsection{Twisted, Abelian, BF Theory}
\label{BF_theory}
Given the twisted topological data of Definition \ref{def:TTD}, we can define a classical BV theory in the sense of Definition \ref{def:BV_classical}. We view $\Omega^\bullet(M,E)$ as a $\Z$-graded vector space, where $k$-forms are assigned degree $-k$. Denote by $\Omega^\bullet(M,E)[j]$ the $j$-shift of this graded vector space, that is the degree of a vector in $\Omega^\bullet(M,E)[j]$ is shifted up by $j$ compared to the degree of the corresponding vector in $\Omega^\bullet(M,E)$. Thus, a $k$-form in $\Omega^\bullet(M,E)[j]$ has degree $j-k$. The space of fields for (twisted) Abelian $BF$ theory is defined as
\begin{equation}
    \F_{BF} = \Omega^\bullet(M,E)[1] \oplus \Omega^\bullet(M,E)[n-2]
\end{equation}
Thus, a field $(\A,\B) \in \F_{BF}$ consists of a pair of inhomogeneous differential forms
$$\A = \A^{(0)} + \cdots + \A^{(n)}, \quad \B = \B^{(0)} + \cdots + \B^{(n)},$$
where $\A^{(k)}$ is an $E$-valued k-form, of total degree $|\A^{(k)}| = 1-k$, and $\B^{(k)}$ is an $E$-valued k-form of total degree $|\B^{(k)}| = n-2-k$. The symplectic pairing $\Omega_{BF}: \F_{BF}\times\F_{BF} \to \R$, is given by
\begin{equation}
    \Omega_{BF}((0,\B),(\A,0)) = \int_M \langle\B\wedge\A\rangle_E,
\end{equation}
and extended by graded anti-symmetry and linearity to all of $\F_{BF}\times\F_{BF}$. Here, $\langle\B\wedge\A\rangle_E$ denotes taking the inner product in $E$ and the exterior product in $\wedge^\bullet T^*M$, and the integral is only of the top-form part of the resulting expression. Finally, the action functional for (twisted) Abelian $BF$ theory is
\begin{equation}
\label{BF_action_traditional}
    S_{BF}\equiv S_{\d}(\A,\B) \doteq \int_M \langle\B\wedge\d\A\rangle_E.
\end{equation}

\begin{rmk}
Note that we can view $\d$ as acting on $\F_{BV}$:
$$\d (\A,\B) = (\d\A,\d\B), \quad (\A,\B) \in \Omega^\bullet(M,E)[1] \oplus \Omega^\bullet(M,E)[n-2].$$
In this way, $\d$ becomes a degree $1$ operator on $\F_{BF}$ and the action functional\footnote{In the formula below we think of $\A,\B$ as tautological (a.k.a.\ ``evaluation'') functionals.} can be written as 
$$S_{\d}(\A,\B) = \frac{1}{2}\Omega_{BF}((\A,\B), \d(\A,\B)).$$
Thanks to the compatibility of $\nabla$ with the inner product on $E$, Stokes' theorem shows that $\d$ is graded anti-symmetric with respect to the symplectic pairing.
\end{rmk}

\begin{rmk}
\label{BF_CME}
One can show that the Poisson bracket with respect to $\Omega_{BF}$ is well-defined for local functionals and that
$$\{S_{\d},S_{\d}\}_{\Omega_{BF}} = \pm 2\int_M \langle\d\B\wedge\d\A\rangle_E = \pm 2\int_M \langle\B\wedge(\d)^2\A\rangle_E = 0,$$
where $\pm$ denotes a sign depending on the degree of $\A$ and $\B$. Thus, $S_{\d}$ satisfies the classical master equation, see Definition \ref{def:BV_classical}, and $(\F_{BF},\Omega_{BF},S_{\d},\d)$ defines a classical BV theory.
\end{rmk}

We conclude this section by considering a slightly more general setting, where we replace the operator $\d$ by some operator $Q$ with similar properties as $\d$. Namely, consider a differential operator on the space of $E$-valued forms
$$Q: \Omega^\bullet(M,E) \to \Omega^{\bullet+1}(M,E),$$
satisfying $Q^2 = 0$ and graded anti-symmetry with respect to the pairing $\Omega_{BF}$. We then consider the following theory, which can be thought of as a generalisation of Abelian $BF$ theory, with the same space of fields and symplectic structure, but action functional
$$S_Q(\A,\B) = \int_M \langle\B\wedge Q\A\rangle_E.$$
This will be used in Section \ref{sec:perturbingfunctional} to view the perturbed action functional as a quadratic theory for a perturbed differential. Note that the arguments of Remark \ref{BF_CME} still apply, showing that this defines a classical BV theory.

\begin{defn}[Generalised $BF$ theory]
We call \emph{generalised BF theory} the classical BV theory specified by the data  $(\F_{BF},\Omega_{BF},S_Q,Q)$.
\end{defn}


\subsection{Ruelle zeta function}\label{sec:Ruellezeta}

Consider again the twisted topological data of Definition \ref{def:TTD}. Assume, \emph{additionally}, that $M$ is a contact manifold with contact form $\alpha$, and that it is equipped with a contact, Anosov vector field, i.e.

\begin{defn}[Contact Anosov vector fields]
\label{def_anosov}
The flow $\phi_t\colon M\to M$ of a vector field $X\in\C(M,TM)$ is Anosov if there exists a $d\phi_t$-invariant, continuous, splitting of the tangent bundle:
\begin{equation*}
    T_xM = T_0(x) \oplus T_s(x) \oplus T_u(x), \quad\text{with }\, T_0(x) = \R X(x),
\end{equation*}
i.e.\ such that $d\phi_t(T_\bullet(x)) = T_\bullet(\phi_t(x))$ with $\bullet \in \{0, s, u\}$, and such that there exist constants $C, \theta >0$ for which we have, $\forall t \ge 0$:
\begin{equation*}
    \forall v \in T_s(x):\, \|d\phi_t(x)v\| \le Ce^{-\theta t}\|v\|, \quad \forall v \in T_u(x):\, \|d\phi_{-t}(x)v\| \le Ce^{-\theta t}\|v\|.
\end{equation*}
Here $\|\cdot\|$ is the norm induced by some Riemannian metric on $M$.\footnote{The Anosov property is independent of this choice of metric, although the specific values of the constants $C, \theta$ are not.} 
We call $T_s/T_u$ the stable/unstable bundles, and we assume that they are orientable. A vector field is \emph{Anosov} iff its flow is. An Anosov vector field is \emph{contact}, on a contact manifold $M$, if there exists a contact form $\alpha$ such that $X$ is its Reeb vector field
\end{defn}

\begin{defn}[Twisted Ruelle zeta function]
Let $X$ be an Anosov vector field on $M$ with flow $\varphi_t$. The \emph{Ruelle zeta function} for the flow $\varphi$, \emph{twisted} by a (unitary) representation $\rho$ of $\pi_1(M)$, is
\begin{align}\label{ruellezetafunction}
    \zeta_\rho (\lambda) := \prod_{\gamma \in \mathcal{P}} \mathrm{det}(I - \rho([\gamma])e^{-\lambda \ell(\gamma)}),
\end{align}
where $[\gamma]\in\pi_1(M)$ is the conjugacy class of a single-winding, closed, orbit\footnote{These are often called \emph{prime} closed orbits, hence the symbol $\mathcal{P}$.} $\gamma \in \mathcal{P}$ of the flow $\phi_t$ of the Anosov vector field $X$ and $\lambda \in \mathbb{C}$.
\end{defn}

\begin{rmk}
    The (twisted) Ruelle zeta function was shown to be analytic in a half-plane of large enough $\Re(\lambda)$, and admit a meromorphic continuation to $\mathbb{C}$ \cite{butterley2007smooth, marklof2004selberg, giulietti2013anosov, DyatlovZworski_RuelleZeta,DyatlovGuillarmou}.
\end{rmk}

Let the dimension of the contact manifold be $\dim(M)=2m+1$. In the contact Anosov case the rank of the stable/unstable bundles satisfy $\mathrm{rank}(T_s)=\mathrm{rank}(T_u)=m$, see \cite{FaureTsujii_ContactAnosov}. In order to express the Ruelle zeta function in terms of certain regularized determinants, we start from the observation \cite{DyatlovZworski_RuelleZeta,hadfield_kandel_schiavina} that the (twisted) Ruelle zeta decomposes as 
\begin{align}
    \log \zeta_\rho(\lambda)
    = (-1)^m\sum_{k=0}^{2m} (-1)^k \log \zeta_{\rho,k}(\lambda).
\end{align}
An important interpretation of this comes from the Guillemin trace formula, which allows us to provide an integral representation of $\zeta_{\rho,k}(\lambda)$:
\begin{equation}\label{integralrepzeta}
    \log \zeta_{\rho, k}(\lambda) 
    = -\int_0^\infty 
    t^{-1}e^{-t\lambda} \tr e^{-t\L_{X,k}}
    \,dt.
\end{equation}
where $\L_{X,k} = \L_X\bigr|_{\Omega^k(M,E)\cap\ker(\iota_X)}$ is the Lie derivative restricted to $k$-forms lying in the kernel of $\iota_X$, and the \emph{flat} trace evaluates to \cite{guillemin1977lectures}
\begin{equation}\label{flattrace}
    \tr e^{-t\L_{X,k}}
    = \sum_{\gamma\in \mathcal{P}} \sum_{j=1}^\infty
    \ell(\gamma) \delta(t - j\ell(\gamma))
    \frac{\mathrm{tr}(\rho([\gamma])^j)\mathrm{tr} (\wedge^k P(\gamma)^j)}
    {| \det(I - P(\gamma)^j) |},
\end{equation}
as a distribution on $\R_+$\footnote{$P(\gamma)=d\phi_{-l(\gamma)}|_{T_s\oplus T_u}$ is the linearized Poincaré map associated to the closed orbit $\gamma$}. Hence, we can conclude that
\begin{thm}[\cite{DyatlovZworski_RuelleZeta,hadfield_kandel_schiavina}]\label{thm:Ruellezetadeterminant}
    The Ruelle zeta function is the flat-superdeterminant of the (degree-preserving) operator $\L_X + \lambda$ restricted to the subsapce $\ker(\iota_X) \subset \Omega^\bullet(M,E)$:
    \begin{equation}
    \label{Ruelleasadeterminant}
    \zeta_\rho (\lambda)^{(-1)^m} = \sdet\bigl((\L_{X} + \lambda)\bigr|_{\ker(\iota_X)}\bigr)  = \prod_{k=0}^{2m} {\det} (\L_{X,k} + \lambda)^{(-1)^k}.
    \end{equation}
\end{thm}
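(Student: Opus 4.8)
The plan is to read the statement off the material already assembled: the decomposition $\log\zeta_\rho(\lambda)=(-1)^m\sum_{k=0}^{2m}(-1)^k\log\zeta_{\rho,k}(\lambda)$ recalled above, the Mellin-type integral representation \eqref{integralrepzeta} of each $\log\zeta_{\rho,k}$, and the notion of flat determinant constructed in Appendix \ref{section_flat_det}. The conceptual input is that $\det(\,\cdot\,)$ is \emph{defined} so as to generalise zeta-function regularisation by replacing the ordinary trace with the flat trace $\tr$; concretely, for an operator $A$ whose heat semigroup has a well-defined flat heat trace, one has $\log\det(A)=-\int_0^\infty t^{-1}\tr(e^{-tA})\,dt$, understood via meromorphic continuation in the spectral parameter. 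Everything else is bookkeeping, plus one analytic point (third paragraph).

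First I would specialise this to $A=\L_{X,k}+\lambda$. Since $e^{-t(\L_{X,k}+\lambda)}=e^{-t\lambda}e^{-t\L_{X,k}}$, this gives $\log\det(\L_{X,k}+\lambda)=-\int_0^\infty t^{-1}e^{-t\lambda}\tr(e^{-t\L_{X,k}})\,dt$, whose right-hand side is precisely $\log\zeta_{\rho,k}(\lambda)$ by \eqref{integralrepzeta}; hence $\det(\L_{X,k}+\lambda)=\zeta_{\rho,k}(\lambda)$. This is first an identity of holomorphic functions on a half-plane $\{\Re\lambda>\lambda_0\}$, where the Mellin integral — equivalently the Euler product \eqref{ruellezetafunction} — converges absolutely, and then on all of $\mathbb{C}$ by uniqueness of meromorphic continuation, using that the $\zeta_{\rho,k}$ are meromorphic. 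Next, exponentiating the decomposition of $\log\zeta_\rho$ yields $\zeta_\rho(\lambda)^{(-1)^m}=\prod_{k=0}^{2m}\zeta_{\rho,k}(\lambda)^{(-1)^k}=\prod_{k=0}^{2m}\det(\L_{X,k}+\lambda)^{(-1)^k}$. Finally, grading $\ker(\iota_X)\subset\Omega^\bullet(M,E)$ by form degree and noting that $(\L_X+\lambda)\vert_{\ker(\iota_X)}$ is degree-preserving with $(\L_X+\lambda)\vert_{\Omega^k(M,E)\cap\ker(\iota_X)}=\L_{X,k}+\lambda$, the definition of the flat superdeterminant (cf.\ Definition \ref{fresnel}) identifies this product with $\sdet\bigl((\L_X+\lambda)\vert_{\ker(\iota_X)}\bigr)$, which is the claimed formula.

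The only step carrying genuine content is the identification $\det(\L_{X,k}+\lambda)=\zeta_{\rho,k}(\lambda)$, and the thing to be careful about is that it compares the abstract flat determinant of Appendix \ref{section_flat_det}, applied to the \emph{non-elliptic} operator $\L_{X,k}+\lambda$, with the Ruelle zeta as honest meromorphic functions rather than formal expressions. Here the distributional nature of $\tr(e^{-t\L_{X,k}})$ — a sum of Dirac masses supported at the periods $j\ell(\gamma)$, a discrete set bounded away from $0$ — works in our favour: there is no small-$t$ divergence, the Mellin integral converges outright for $\Re\lambda$ large, and the regularisation merely packages the meromorphic continuation, which is exactly that of $\zeta_\rho$. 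One should also check that the orientations on $T_s,T_u$ entering $\mathrm{tr}(\wedge^kP(\gamma)^j)$ and $|\mathrm{det}(I-P(\gamma)^j)|$ in \eqref{flattrace} are such that $(-1)^m\mathrm{det}(I-P(\gamma)^j)=|\mathrm{det}(I-P(\gamma)^j)|$, so that, via $\sum_k(-1)^k\mathrm{tr}(\wedge^kP)=\mathrm{det}(I-P)$, the quantity $(-1)^m\sum_k(-1)^k\log\zeta_{\rho,k}(\lambda)$ collapses to $-\sum_{\gamma\in\mathcal{P}}\sum_{j\ge1}\tfrac1j\,\mathrm{tr}(\rho([\gamma])^j)e^{-j\lambda\ell(\gamma)}=\log\zeta_\rho(\lambda)$. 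This dynamical identity — together with the representation \eqref{integralrepzeta} itself, which rests on the Guillemin trace formula — is the substance of \cite{DyatlovZworski_RuelleZeta,hadfield_kandel_schiavina}; in the present account it is invoked, the new content of the theorem being the repackaging in terms of the flat superdeterminant.
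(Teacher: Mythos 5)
Your proposal is correct and follows essentially the same route as the paper, which likewise assembles the degree decomposition of $\log\zeta_\rho$, the Guillemin-trace integral representation \eqref{integralrepzeta} of $\log\zeta_{\rho,k}$, and the flat-determinant formula $\log\det(\L_{X,k}+\lambda)=-\int_0^\infty t^{-1}e^{-t\lambda}\tr(e^{-t\L_{X,k}})\,dt$ from Appendix \ref{section_flat_det}, deferring the dynamical identities (the sign $(-1)^m$ from the orientability of $T_s,T_u$ and the meromorphic continuation) to \cite{DyatlovZworski_RuelleZeta,hadfield_kandel_schiavina}. Your remarks on the absence of small-$t$ issues and on the sign bookkeeping match the intended argument; the only cosmetic slip is attributing the flat superdeterminant to Definition \ref{fresnel} rather than to its definition in Appendix \ref{section_flat_det}.
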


\begin{rmk}[Anisotropic Sobolev spaces]\label{rem:Anisotropic}
An important observation in this context is the relation between the poles of the meromorphic extension\footnote{The fact that the resolvent admits a meromorphic continuation is proven in \cite{faure_sjostrand,DyatlovZworski_RuelleZeta}.} of the resolvent, defined (for large $\Re(\lambda)$) as
\[
R_X(\lambda)=(\L_X+\lambda)^{-1} = \int_0^\infty e^{-t(\L_X+\lambda)}\,dt,
\] 
and the eigenvalues of the operator $\L_X$ acting on certain Hilbert spaces specially taylored to the dynamics. On such \emph{anisotropic Sobolev spaces} $H_{sG}$, $\L_X$ can be shown to have discrete spectrum. The vector field $X$ (or sometimes the Lie derivative operator $\L_X$) is said to have a Pollicott--Ruelle resonance at $\lambda$, if there are eigenvectors in some $H_{sG}$ for the eigenvalue $\lambda$. A resonance for the eigenvalue zero is then related to the resolvent having a pole at $\lambda=0$. When this is \emph{not} the case, one can ``invert'' $\L_X$ with an important caveat: it is inverted as an operator from a domain that is dense in some anisotropic Sobolev space of sufficient regularity $s$, so that $\L_X\colon \mathfrak{D}(\L_X) \subset H_{sG} \to H_{sG}$. Henceforth we will forgo the explicit construction\footnote{This is obtained from $L^2(M)$ by means of an exponential factor $e^{-sG}$ with $G$ an appropriately chosen pseudodifferential operator.} of $H_{sG}$, and refer to \cite{faure_sjostrand,DyatlovZworski_RuelleZeta} for details.
\end{rmk}

\begin{rmk}
\label{entire_function}
It can be shown that the flat determinants appearing in the alternating product \eqref{Ruelleasadeterminant} are entire functions of $\lambda \in \mathbb{C}$, whose zeros coincide with the Pollicott-Ruelle resonances in the respective form degree, see \cite{DyatlovZworski_RuelleZeta}. Thus, as befits a determinant, ${\det} (\L_{X,k} + \lambda)$ vanishes precisely when $\L_{X,k} + \lambda$ is not invertible on some anisotropic Sobolev space of sufficient regularity.
\end{rmk}

\subsection{Ruelle zeta and field theory}
\label{section_ruelle_zeta_perturbative}

In this section we apply the field theory framework, outlined in Section \ref{section_perturbation_theory}, to $BF$ Theory (see Section \ref{BF_theory}). We will employ a particular gauge fixing, introduced in \cite{hadfield_kandel_schiavina} and studied in our companion paper \cite{Schiavina_Stucker1}, which is available---given the twisted topological data of Definition \ref{def:TTD}---on contact manifolds that admit a contact Anosov flow. In this setting, the effective field theory philosophy has a nice interpretation in terms of said flow. 

In order to obtain a well-defined non-zero value for the gauge fixed free partition function, we impose the following additional condition on the Anosov flow:
\begin{assumption}\label{ass:noresonance}
The Anosov vector field $X$ has no Pollicott-Ruelle resonance at $0$, which is then in the resolvent set of the closed densely defined operator $\L_X: \mathfrak{D}(\L_X)\subset H_{sG} \to H_{sG}$, acting on the anisotropic Sobolev spaces $H_{sG}$, see \cite[Section 3.1]{DyatlovZworski_RuelleZeta}. 
\end{assumption}

\begin{defn}[Twisted, contact, Anosov data]\label{def:TCAD}
    Let $(M,E,\nabla,\rho)$ be the twisted topological data of Definition \ref{def:TTD}, and let additionally $M$ be a contact manifold of dimension $2m+1$ endowed with a contact Anosov vector field $X$ satisfying Assumption \ref{ass:noresonance}. We call the data $(M,E,\nabla,\rho,X)$ the \emph{twisted, contact, Anosov data}.
\end{defn}

Recall that in (twisted) Abelian, $BF$ theory the operator $Q_{BV}\equiv Q$ of Definition \ref{def:BV_classical} (cf.\ Remark \ref{rmk:QvsQBV}) is simply given by:
$$Q(\A,\B) = (\d\A,\d\B), \quad (\A,\B) \in \F_{BF} = \Omega^\bullet(M,E)[1] \oplus \Omega^\bullet(M,E)[n-2].$$
\begin{prop}[\cite{hadfield_kandel_schiavina,Schiavina_Stucker1}]\label{prop:GFoperator}
The operator
\[
Q_{GF} \circlearrowright \F_{BF}, \qquad Q_{GF} (\A,\B) = (\iota_X\A,\iota_X\B),
\]
is a gauge-fixing operator. In particular, defining\footnote{The notation $Q_{GF}^*$ comes from the fact that $\iota_X$ and $\alpha\wedge$ are conjugate to one another w.r.t.\ an inner product on $\F_{BF}$. See \cite{Schiavina_Stucker1} for details.}
\[
Q_{GF}^* \circlearrowright \F_{BF}, \qquad Q_{GF}^* (\A,\B) = (\alpha\wedge\A,\alpha\wedge\B)
\]
we have the Lagrangian splitting:
\[
\F_{BF}=\mathrm{im}(Q_{GF}) \oplus \mathrm{im}(Q_{GF}^*) = \mathrm{ker}(\iota_X) \oplus \mathrm{im}(\alpha\wedge),
\]
which is invariant under the action of the graded commutator:
\[D(\A,\B) = [Q_{GF},Q](\A,\B) = \L_X(\A,\B).\]
The choice of $\sfL_X := \ker(\iota_X) = \im(\iota_X)$ as gauge fixing Lagrangian will be referred to as ``the contact gauge''.
\end{prop}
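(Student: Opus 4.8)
The plan is to verify each clause of Definition~\ref{def:GFoperator} in turn for the operator $Q_{GF}=(\iota_X\,\cdot\,,\iota_X\,\cdot\,)$, treating the two summands of $\F_{BF}$ identically since $Q_{GF}$ acts diagonally. First I would record the elementary algebraic facts about contraction by a vector field: $\iota_X$ lowers form degree by one, hence it is a degree $-1$ local operator on $\F_{BF}$; it squares to zero since $\iota_X\iota_X=0$; and on a contact manifold the contact form satisfies $\iota_X\alpha=1$, where $X$ is the Reeb field. The key linear-algebra input is the fiberwise Lefschetz-type decomposition $\wedge^\bullet T^*M = \ker(\iota_X)\oplus(\alpha\wedge\wedge^\bullet T^*M)$, which holds pointwise because $\iota_X(\alpha\wedge\beta)=\beta-\alpha\wedge\iota_X\beta$, so $\iota_X$ is an isomorphism from $\alpha\wedge\Omega^\bullet$ onto $\ker(\iota_X)$ with inverse (up to the lower-order correction) given by $\alpha\wedge$. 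This gives simultaneously $\mathrm{im}(\iota_X)=\ker(\iota_X)$ and the direct sum decomposition $\F_{BF}=\ker(\iota_X)\oplus\mathrm{im}(\alpha\wedge)$, tensored with $E$ and extended to inhomogeneous forms.

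Next I would check the Lagrangian property. Isotropy of $\sfL=\ker(\iota_X)$ with respect to $\Omega_{BF}$ should follow from the observation that $\Omega_{BF}((0,\B),(\A,0))=\int_M\langle\B\wedge\A\rangle_E$ and that the wedge product of two forms annihilated by $\iota_X$ is again annihilated by $\iota_X$; but a top form $\omega$ on a $(2m+1)$-manifold with $\iota_X\omega=0$ must vanish (since $\iota_X$ is injective on top forms), so the integrand is identically zero. Isotropy of the complement $\sfK=\mathrm{im}(\alpha\wedge)$ is the same argument: $(\alpha\wedge\A)\wedge(\alpha\wedge\B)$ contains $\alpha\wedge\alpha=0$. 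Together with the direct sum already established, this shows $\sfL$ is Lagrangian in the sense of Definition~\ref{def:lagrangian_subspace}. For the surjectivity $\mathrm{im}(Q_{GF})=\ker(\iota_X)$ I would again invoke the pointwise splitting: given $\beta\in\ker(\iota_X)$, write $\beta=\iota_X(\alpha\wedge\beta)$ using $\iota_X(\alpha\wedge\beta)=\beta-\alpha\wedge\iota_X\beta=\beta$.

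Then I would compute the graded commutator $D=[Q_{GF},Q]=\iota_X d^\nabla + d^\nabla\iota_X$. The Cartan magic formula $\L_X=\iota_X d+d\iota_X$ holds for the de Rham differential, and since $\nabla$ is flat and $X$-compatible one checks it persists for the twisted differential $d^\nabla$, giving $D=\L_X$ acting diagonally on $(\A,\B)$; this is manifestly a first-order differential operator. Finally, $D$-invariance of the splitting follows because $\L_X$ commutes with $\iota_X$ (again from Cartan's formula, $\L_X\iota_X=\iota_X\L_X$ since $\iota_{[X,X]}=0$), so $\L_X$ preserves $\ker(\iota_X)$; and $\L_X(\alpha\wedge\A)=(\L_X\alpha)\wedge\A+\alpha\wedge\L_X\A=\alpha\wedge\L_X\A$ because $\L_X\alpha=0$ for the Reeb field, so $\L_X$ also preserves $\mathrm{im}(\alpha\wedge)$. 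The main obstacle I anticipate is not any single step but rather being careful about the inhomogeneous/graded bookkeeping: $\F_{BF}$ carries shifted gradings $[1]$ and $[n-2]$, the contact condition is genuinely needed (not just $\iota_X\ne 0$) to get the clean splitting and $\L_X\alpha=0$, and one must confirm that $d^\nabla$ really does anticommute into $\L_X$ in the twisted setting; the deeper analytic content — that $\L_X$ on $\ker(\iota_X)$ behaves well enough for the later determinant manipulations — is deferred to Remark~\ref{rem:Anisotropic} and the companion paper \cite{Schiavina_Stucker1}, so it need not enter this proof.
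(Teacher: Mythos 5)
Your proof is correct, and since the paper itself states Proposition \ref{prop:GFoperator} without proof (deferring to \cite{hadfield_kandel_schiavina,Schiavina_Stucker1}), your argument supplies exactly the standard one used there: the pointwise contact splitting coming from $\iota_X(\alpha\wedge\beta)=\beta-\alpha\wedge\iota_X\beta$ with $\iota_X\alpha=1$, isotropy of $\ker(\iota_X)$ and $\im(\alpha\wedge)$ via injectivity of $\iota_X$ on top forms and $\alpha\wedge\alpha=0$, the (twisted) Cartan formula giving $D=\L_X$, and invariance of the splitting from $[\L_X,\iota_X]=0$ and $\L_X\alpha=0$ for the Reeb field. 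All steps check out, including the verification of each clause of Definition \ref{def:GFoperator}, so nothing further is needed.
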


\begin{rmk}
The heat kernel for $D$ is the kernel of $\phi_{-t}^*$, the pull-back by the flow of $X$. 
Thus, we can think of the propagator $P_{L_1,L_2}$ as propagating particles along the flow of $X$, from time $L_1$ to time $L_2$. The scale-$L$ effective interaction has an interpretation as the average interaction that particles feel as they flow along $X$ for a time scale $L$. In other words, the interpretation of $L$ as a length scale described in Section \ref{EFT} has given way to an interpretation of $L$ as the time scale for the flow by $X$.
\end{rmk}

As was shown in \cite{hadfield_kandel_schiavina}, we can use field theory to interpret the (absolute value of the) Ruelle zeta function, in its determinant form (Theorem \ref{thm:Ruellezetadeterminant}) in terms of the partition function of BF theory. Notice that in order to compare with Definition \ref{def:partitionfunction}, one needs to recall that $\sdet(D\vert_{\sfL_X}) = \sdet(\L_X\vert_{\im(\iota_X)})^2$, and the grading on $\F_{BF}$ differs by 1 w.r.t. the natural grading on $\Omega^\bullet(M,E)$ (see also \cite[Remarks 11 and 12]{hadfield_kandel_schiavina}).

\begin{thm}\label{thm:ruellezetazeroPF}
    Let $(M,E,\nabla,\rho,X)$ be the twisted, contact, Anosov data. Up to a phase, the partition function of BF theory in the contact gauge is the value of the Ruelle zeta function at zero:
    \begin{equation}
        Z(S_{\d},\mathsf{L}_X) = |\zeta_\rho(0)|^{(-1)^m}.
    \end{equation}
\end{thm}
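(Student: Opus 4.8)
The plan is to assemble Theorem \ref{thm:ruellezetazeroPF} directly from the pieces already in place. By Definition \ref{def:partitionfunction}, the gauge-fixed partition function of the quadratic theory $(\F_{BF},\Omega_{BF},S_{\d},Q)$ with gauge-fixing operator $Q_{GF}=\iota_X$ is
\[
Z(S_{\d},\sfL_X) = |\sdet(Q_{GF}|_{\sfK})|^{\frac{1}{2}} \cdot |\sdet(D|_{\sfL_X})|^{-\frac{1}{2}},
\]
where $\sfK = \im(Q_{GF}^*) = \im(\alpha\wedge)$ and $D = \L_X$ by Proposition \ref{prop:GFoperator}. First I would treat the Jacobian-type factor $|\sdet(Q_{GF}|_{\sfK})|^{\frac{1}{2}}$: following the footnote to Definition \ref{def:partitionfunction}, I would introduce the auxiliary inner product on $\F_{BF}$ with respect to which $Q_{GF}=\iota_X$ and $Q_{GF}^*=\alpha\wedge$ are mutually conjugate, and compute $|\sdet(Q_{GF}Q_{GF}^*)|^{\frac{1}{2}}$ instead. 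On $\sfK=\im(\alpha\wedge)$ the composite $\iota_X(\alpha\wedge(\,\cdot\,))$ acts (using $\iota_X\alpha = \alpha(X)=1$ for the Reeb field and the contraction identity $\iota_X(\alpha\wedge\omega)=\omega-\alpha\wedge\iota_X\omega$, which reduces to $\omega$ on $\ker\iota_X$) as a purely algebraic, grading-preserving bundle endomorphism whose flat determinant is a constant — in fact, after the degree bookkeeping, this factor contributes only to the phase and the normalisation we have already agreed to drop in Remark \ref{phase_factor}, so it can be absorbed into the ``up to a phase'' clause.

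Next I would handle the main factor $|\sdet(D|_{\sfL_X})|^{-\frac{1}{2}}$. Here $\sfL_X = \ker(\iota_X)\subset\F_{BF}$, but $\F_{BF}$ carries two copies of $\Omega^\bullet(M,E)$ (the $\A$ and $\B$ sectors), and $D=\L_X$ acts diagonally, so $\sdet(D|_{\sfL_X})$ factorises as the square of $\sdet(\L_X|_{\ker(\iota_X)\subset\Omega^\bullet(M,E)})$ — this is exactly the identity $\sdet(D|_{\sfL_X}) = \sdet(\L_X|_{\im(\iota_X)})^2$ flagged in the paragraph before the theorem. I would be careful with the grading shift: the grading on $\F_{BF}$ differs by $1$ from the natural form grading on $\Omega^\bullet(M,E)$, and the two sectors are shifted by $1$ and by $n-2$ respectively; the parities $(-1)^k$ entering the superdeterminant must be tracked through these shifts, and one checks that the net effect reproduces the sign conventions of Theorem \ref{thm:Ruellezetadeterminant} (possibly contributing another harmless overall sign). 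After the square root, $|\sdet(D|_{\sfL_X})|^{-\frac{1}{2}} = |\sdet(\L_X|_{\ker(\iota_X)})|^{-1}$.

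Finally I would invoke Theorem \ref{thm:Ruellezetadeterminant} with $\lambda = 0$: it gives $\sdet((\L_X+\lambda)|_{\ker(\iota_X)}) = \zeta_\rho(\lambda)^{(-1)^m}$, hence at $\lambda=0$, $|\sdet(\L_X|_{\ker(\iota_X)})| = |\zeta_\rho(0)|^{(-1)^m}$. Here Assumption \ref{ass:noresonance} is what guarantees the flat determinant is nonzero and the expression is well-defined — without it $\L_X$ would fail to be invertible on the relevant anisotropic Sobolev space and $\zeta_\rho(0)$ could vanish. Combining, $Z(S_{\d},\sfL_X) = |\sdet(\L_X|_{\ker(\iota_X)})|^{-1 \cdot (\mp 1)}$; reconciling the exponent with the statement amounts to noting that $(-1)^m$ appears as a sign on $\log\zeta_\rho$, so $|\zeta_\rho(0)|^{-1}$ and $|\zeta_\rho(0)|$ swap according to the parity of $m$, matching $|\zeta_\rho(0)|^{(-1)^m}$. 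The main obstacle I anticipate is not any single hard estimate — all the analytic input (meromorphic continuation, flat determinants entire, Assumption \ref{ass:noresonance}) is imported — but rather the careful bookkeeping of the three grading shifts (the $[1]$ and $[n-2]$ shifts, and the $+1$ discrepancy between $\F_{BF}$-degree and form-degree) as they propagate through the two superdeterminant factors, making sure every sign and every power of $\pm 1$ lands consistently so that the final exponent is exactly $(-1)^m$ and everything else is genuinely a phase; this is the kind of computation where an off-by-one in a shift silently flips $\zeta_\rho(0)$ to its reciprocal.
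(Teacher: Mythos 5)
Your route is the intended one — Definition \ref{def:partitionfunction} with $Q_{GF}=\iota_X$, triviality of the Jacobian factor, the identity $\sdet(D|_{\sfL_X})=\sdet(\L_X|_{\im(\iota_X)})^2$, and Theorem \ref{thm:Ruellezetadeterminant} at $\lambda=0$ — and all the analytic input is correctly imported. One small imprecision first: the Jacobian factor is not ``a constant absorbable into the phase'' (a positive constant $\neq 1$ would change the modulus and could not be dropped); rather, your own identity $\iota_X(\alpha\wedge\omega)=\omega$ for $\omega\in\ker(\iota_X)$, using the Reeb condition $\iota_X\alpha=1$, shows that $Q_{GF}Q_{GF}^*$ is the identity on the relevant subspace, so this factor is exactly $1$, which is what the paper asserts.

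The genuine gap is the exponent bookkeeping that you yourself identify as the crux and then do not carry out; the reconciliation you offer is incorrect. Whether one obtains $|\zeta_\rho(0)|^{(-1)^m}$ or its reciprocal has nothing to do with the parity of $m$: the $(-1)^m$ enters only through Theorem \ref{thm:Ruellezetadeterminant} and is never modified, while the potential inversion comes from the degree shifts and must cancel against the exponent $-\tfrac12$ in Definition \ref{def:partitionfunction}. Concretely, in both summands of $\F_{BF}=\Omega^\bullet(M,E)[1]\oplus\Omega^\bullet(M,E)[n-2]$ a $k$-form has $\F_{BF}$-parity $(-1)^{k+1}$ (both shifts, $1$ and $n-2=2m-1$, are odd), and $D=\L_X$ acts diagonally preserving form degree and sector, so
\begin{equation*}
\sdet\bigl(D|_{\sfL_X}\bigr)\;=\;\prod_{k}\det\bigl(\L_X|_{\Omega_0^k(M,E)}\bigr)^{2(-1)^{k+1}}\;=\;\Bigl(\prod_k\det(\L_{X,k})^{(-1)^k}\Bigr)^{-2}\;=\;\zeta_\rho(0)^{-2(-1)^m},
\end{equation*}
whence $Z(S_{\d},\sfL_X)=|\sdet(D|_{\sfL_X})|^{-1/2}=|\zeta_\rho(0)|^{(-1)^m}$. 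In particular your intermediate step $|\sdet(D|_{\sfL_X})|^{-1/2}=|\sdet(\L_X|_{\ker\iota_X})|^{-1}$ is only valid if the right-hand superdeterminant is taken in the shifted $\F_{BF}$-grading; in the natural form grading used in Theorem \ref{thm:Ruellezetadeterminant} the correct exponent is $+1$, not $-1$. Since $|\zeta_\rho(0)|$ and $|\zeta_\rho(0)|^{-1}$ differ by more than a phase (unless the value happens to be $1$), this step cannot be waved through by appealing to ``a harmless overall sign'' or to the parity of $m$; once it is done as above, the statement follows, with Assumption \ref{ass:noresonance} guaranteeing, as you say, that the flat determinants are nonzero.
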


\subsection{The Perturbing Functional}\label{sec:perturbingfunctional}

In this section we will consider a perturbation of the free $BF$ action by a functional $\digamma_X$, which enters at order $\h$:
$$S_{\d}(\A,\B) = \int_M \langle\B\wedge\d\A\rangle_E \leadsto S_{Q_X} \doteq S_{\d} + \hbar \digamma_X.$$

\begin{rmk}
    Assumption \ref{ass:noresonance} implies that the inverse $\L_X^{-1}: H_{sG} \to H_{sG}$ is well-defined on anisotropic Sobolev spaces. Note that both $\iota_X$ and $\d$ commute with $\L_X^{-1}$. 
\end{rmk}

\begin{defn}
Under Assumption \ref{ass:noresonance}, the perturbing functional is:
\begin{equation}
\label{interaction_term}
    \digamma_X(\A,\B) = \int_M \langle\B\wedge\L_X^{-1}\d\A\rangle_E.
\end{equation}
\end{defn}
This functional exhibits some peculiarities. The operator $W_X:= \L_{X}^{-1}\d$ appearing in the functional does not necessarily map smooth sections to smooth sections. Indeed, the condition $\L_X^{-1}(\omega)$ smooth for all $\omega \in \Omega^\bullet(M,E)$ is equivalent to the Lie derivative $\L_X: \Omega^\bullet(M,E)\to \Omega^\bullet(M,E)$ being surjective, which is not necessarily the case.
Thus, $\L_X^{-1}\d\A \in H_{sG}$ may only be a distributional section in the anisotropic Sobolev space (see Remark \ref{rem:Anisotropic}). Nevertheless, we can still integrate it against the smooth section $\B$, so the functional $\digamma_X$ is well-defined on the space of fields. However, it is not a local functional in the sense of Section \ref{section_perturbation_theory}. 

\begin{rmk}
In some sense, we can view $W_X=\L_X^{-1}\d$ as a chain contraction for the chain complex defined by our gauge fixing operator $\iota_X$. Indeed, 
$$\iota_X\L_X^{-1}\d + \L_X^{-1}\d\iota_X = \L_X^{-1}(\iota_X\d + \d\iota_X) = \L_X^{-1}\L_X = \mathrm{id}_{\mathfrak{D}(\L_X)}$$
is the identity operator on the domain of $\L_X$ in $H_{sG}$.
Since smooth sections are in $\mathfrak{D}(\L_X)$, this gives the identity on the chain complex $\Omega^\bullet(M,E)$. However, as mentioned above, $W_X$ may not actually define a map into the chain complex of \emph{smooth} sections, so $W_X$ defines a chain contraction only on appropriately defined distributional sections of the twisted cotangent bundle.
\end{rmk}

Treating $\h$ as a formal parameter, and denoting $Q_X \doteq (1 + \hbar \L_X^{-1})\d = \d +\hbar W_X$, we define our perturbed action as
\begin{equation}
\label{full_action}
S_{Q_X}(\A,\B) = S_{\d}(\A,\B) + \h \digamma_X(\A,\B) = \int_M \langle\B\wedge(1+\h\L_X^{-1})\d\A\rangle_E.
\end{equation}
Applying the Poisson bracket to $S_{Q_X}$, we find
\begin{equation*}
    \{S_{Q_X},S_{Q_X}\} = \pm 2\int_M \langle(1+\h\L_X^{-1})\d\B\wedge(1+\h\L_X^{-1})\d\A\rangle_E = 0,
\end{equation*}
where we used the fact that $\d$ and $\L_X^{-1}$ commute, so that both forms in the wedge product are in the image of $\d$. Thus, the action functional $S$ satisfies the classical master equation.

\begin{prop}
    The critical locus of $S_{Q_X}$ is given by Pollicott--Ruelle resonant states with resonance $\hbar$.
\end{prop}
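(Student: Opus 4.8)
The plan is to compute the Euler--Lagrange equations of $S_{Q_X}$ directly, treating the variation in $\A$ and $\B$ separately, and then to recognise the resulting equations as the resonant-state equations for eigenvalue $\hbar$. First I would recall that $S_{Q_X}(\A,\B) = \int_M \langle \B \wedge Q_X \A\rangle_E$ with $Q_X = \d + \hbar W_X = \d + \hbar \L_X^{-1}\d$, and that this is a quadratic functional in the pair $(\A,\B)$, so its critical locus is precisely $\ker(Q_X) \subset \F_{BF}$ once we account for graded anti-symmetry of $Q_X$ with respect to $\Omega_{BF}$ (which holds since $\d$ is graded anti-symmetric and $\L_X^{-1}$ is self-adjoint-like with respect to the relevant pairing, both commuting with $\L_X$). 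Concretely, varying $\B$ gives the equation $Q_X\A = 0$, i.e. $\d\A + \hbar \L_X^{-1}\d\A = 0$, and varying $\A$ gives, after integration by parts, $Q_X\B = 0$ as well. So the critical locus is $\{(\A,\B) : Q_X\A = 0, \; Q_X\B = 0\}$.

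Next I would show that the equation $Q_X\omega = 0$ is equivalent to the Pollicott--Ruelle resonance equation $\L_X\omega = \hbar\omega$ (restricted to an appropriate form degree / to $\ker(\iota_X)$ after gauge fixing, as is implicit in the surrounding discussion). Starting from $(1 + \hbar\L_X^{-1})\d\omega = 0$ and applying $\L_X$ (which is legitimate on $\mathfrak{D}(\L_X)$, using that $\d$ and $\L_X^{-1}$ commute), one obtains $(\L_X + \hbar)\d\omega = 0$, i.e. $\d\omega$ is a resonant state at $-\hbar$ --- so I would need to be careful about which object is the resonant state and with what sign. The cleaner route: under the contact gauge fixing of Proposition \ref{prop:GFoperator}, the relevant quadratic operator governing the theory is $\L_X$ acting on $\ker(\iota_X)$, and the perturbation shifts this to $\L_X - \hbar$ (or $\L_X + \hbar$, depending on conventions); the critical locus of the perturbed quadratic action is then the kernel of the shifted operator, which by definition (Remark \ref{rem:Anisotropic}) consists of the resonant states with resonance $\hbar$. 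I would make this precise by conjugating: on $\ker(\iota_X)$, $Q_X = (1+\hbar\L_X^{-1})\d$, and on this subspace $\d$ is inverted (up to $\L_X$-factors) by $W_X$, so $\ker(Q_X|_{\ker\iota_X})$ is computed by the spectral data of $\L_X$, giving eigenvalue $\hbar$.

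The main obstacle I anticipate is functional-analytic bookkeeping rather than any deep difficulty: one must be honest about the fact that $\L_X^{-1}\d\A$ is generally only a distributional section in some anisotropic Sobolev space $H_{sG}$, so the "critical locus'' is not a subset of smooth fields but of an enlarged (anisotropic) completion, and the variational computation must be performed in a pairing (smooth against distributional, or $H_{sG}$ against $H_{-sG}$) in which the integration by parts is justified. I would handle this by invoking Assumption \ref{ass:noresonance} and Remark \ref{rem:Anisotropic} to work with $\L_X \colon \mathfrak{D}(\L_X) \subset H_{sG} \to H_{sG}$, noting that $\iota_X$, $\d$, $\L_X^{-1}$ all preserve the relevant spaces and commute appropriately, so that all manipulations are valid there. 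The other point requiring a word of care is that $Q_X$ has a formal $\hbar$-expansion and the statement should be read as an identification of the $\hbar$-dependent critical locus as a family; but since everything is polynomial (indeed linear) in $\hbar$ at the level of $Q_X$, this is unproblematic.
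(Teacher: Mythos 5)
Your proposal is correct and follows essentially the same route as the paper: compute the Euler--Lagrange equations $Q_X\A = Q_X\B = 0$, factor $Q_X = \L_X^{-1}\d(\L_X+\hbar)$ using that $\d$ and $\L_X^{-1}$ commute, and use the contact gauge (where $\iota_X\d = \L_X$ on $\im(\iota_X)$ and $\L_X$ is injective by Assumption \ref{ass:noresonance}) to identify the nontrivial solutions with resonant states at $\hbar$, which is exactly how the paper eliminates the closed-form branch and matches your ``cleaner route.'' Regarding your sign worry: with the paper's convention that resonances are the points where $\L_X+\lambda$ fails to be invertible (cf.\ Theorem \ref{thm:Ruellezetadeterminant}), the equation $(\L_X+\hbar)u=0$ \emph{is} the resonance-$\hbar$ condition, so no sign flip occurs.
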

\begin{proof}
The dynamical content of the perturbed classical action $S_{Q_X}=\Omega_{BV}(\B,Q_X\B)$ lies in the Euler-Lagrange equations. For the field $\A$ the Euler-Lagrange equation gives
$$Q_X\A=(1+\h\L_X^{-1})\d\A = 0.$$
Factoring out the $\L_X^{-1}$ this can equivalently be written as
$$\L_X^{-1}(\L_X+\h)\d\A = \L_X^{-1}\d(\L_X+\h)\A = 0.$$
Thus, for $\A$ to solve the equations of motion of the BV theory, we must either have
$$\d\A = 0, \quad\mathrm{or}\quad (\L_X+\h)\A = 0.$$
In other words, the nontrivial solutions are either closed forms or Pollicott-Ruelle resonant states with resonance $\h$. Restricting $\A$ to satisfy our gauge fixing condition $\A \in \im(\iota_X)$, in fact eliminates the first possibility. Indeed, if $\d\A = 0$ in this case, then also $\iota_X\d\A = \L_X\A = 0$, so $\A=0$ by the injectivity of $\L_X$. Thus, on the gauge fixing subspace the equations of motion only have nontrivial solutions if $\h$ is a Pollicott-Ruelle resonance and the solutions are given by the resonant states. The Euler-Lagrange equation for the field $\B$ can be analyzed similarly.
\end{proof}

\begin{rmk}[Dependence on $X$]
\label{remark_vector_field}
Note that the Anosov vector field $X$ enters the perturbed action functional $S_{Q_X}$ in \eqref{full_action} via the perturbing functional $\digamma_X$. In addition, we will use the vector field $X$ to define our gauge fixing condition. This is akin to what happens in Yang--Mills theory, where the action functional depends on a Riemannian metric, and the same metric is used to define the Lorenz gauge. Somewhat unusual here is that the original action functional of $BF$ theory depends solely on topological data, whereas the perturbation introduces a dependence on the additional datum $X$. As observed above, after gauge fixing the perturbed action has critical locus given by the Pollicott-Ruelle resonances with respect to $X$.
Of course, the vector field entering the gauge fixing condition is a priori independent of the vector field entering $S_{Q_X}$. One can consider a family of action functionals $S_{Q_X}$ depending on a vector field $X$, to which we can apply a family of gauge fixing conditions defined by another vector field $Y$. The most sensible result for the partition function of the theory is obtained when we set $Y=X$. An analysis of the case when the two vector fields are allowed to vary independently of each other is given in \cite{ThomasThesis}.
\end{rmk}

The perturbation $\h \digamma_X$ is actually a quadratic functional in the fields. Hence, it is worthwhile to treat the full action $S_{Q_X} = S_{\d} + \h \digamma_X$ as a free action and compute the partition function for the gauge fixing operator $\iota_X$, using the formalism of infinite dimensional Gaussian integrals employed in Definition \ref{def:partitionfunction}. 

\begin{thm}\label{thm:Result_PF}
The partition function of perturbed $BF$ theory $S_{Q_X} = S_{\d} + \hbar \digamma_X$, when computed in the contact gauge $\sfL_X = \ker(\iota_X)$ on a manifold equipped with the twisted, contact, Anosov data of Definition \ref{def:TCAD}, is the twisted Ruelle zeta function up to a phase:
\[
Z(S_{Q_X},\sfL_X) = |\zeta_X(\hbar)|^{(-1)^m}.
\]
\end{thm}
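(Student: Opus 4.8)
The plan is to view the perturbed action $S_{Q_X}$ as the free action of a generalised $BF$ theory in the sense introduced above, with differential operator $Q_X = (1+\hbar\L_X^{-1})\d = \d + \hbar W_X$, and then apply the partition-function formula of Definition \ref{def:partitionfunction} together with Theorem \ref{thm:Ruellezetadeterminant}. First I would verify that $\iota_X$ remains a gauge-fixing operator for the theory $(\F_{BF},\Omega_{BF},S_{Q_X},Q_X)$ in the sense of Definition \ref{def:GFoperator}: the splitting $\F_{BF} = \ker(\iota_X)\oplus\im(\alpha\wedge)$ is unchanged (it depends only on $\iota_X$, not on $Q$), and the relevant commutator is now
\[
D_X := [Q_{GF},Q_X] = [\iota_X,\d] + \hbar[\iota_X,\L_X^{-1}\d] = \L_X + \hbar\L_X^{-1}\L_X\L_X^{-1}\cdots
\]
which, since $\iota_X$ and $\L_X^{-1}$ commute and $\iota_X\L_X^{-1}\d + \L_X^{-1}\d\iota_X = \mathrm{id}$ on $\mathfrak D(\L_X)$, simplifies to $D_X = \L_X + \hbar\cdot\mathrm{id} = \L_X + \hbar$ on $\ker(\iota_X)$. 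In particular $D_X$ restricted to $\sfL_X$ is exactly $(\L_X+\hbar)|_{\ker(\iota_X)}$, the operator whose flat-superdeterminant is $\zeta_\rho(\hbar)^{(-1)^m}$ by Theorem \ref{thm:Ruellezetadeterminant}.

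Next I would assemble the partition function via Definition \ref{def:partitionfunction}:
\[
Z(S_{Q_X},\sfL_X) = |\sdet(Q_{GF}|_{\sfK})|^{\frac12}\cdot|\sdet(D_X|_{\sfL_X})|^{-\frac12}.
\]
The second factor is $|\sdet((\L_X+\hbar)|_{\ker(\iota_X)})|^{-\frac12}$, but one must account for the degree shift: the grading on $\F_{BF}=\Omega^\bullet(M,E)[1]\oplus\Omega^\bullet(M,E)[n-2]$ differs from the natural grading on $\Omega^\bullet(M,E)$ by $1$, and there are two copies (the $\A$ and $\B$ sectors), so as in the computation preceding Theorem \ref{thm:ruellezetazeroPF} one has $\sdet(D_X|_{\sfL_X}) = \sdet\big((\L_X+\hbar)|_{\im(\iota_X)}\big)^{\pm 2}$ up to the sign coming from the shift by $1$ (which flips the superdeterminant to its reciprocal). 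Combining this with $\sdet((\L_X+\hbar)|_{\ker(\iota_X)}) = \zeta_\rho(\hbar)^{(-1)^m}$ gives $|\sdet(D_X|_{\sfL_X})|^{-\frac12} = |\zeta_\rho(\hbar)|^{(-1)^{m+1}}$ or its reciprocal; I would track the exponents carefully, exactly as in \cite{hadfield_kandel_schiavina}, Remarks 11 and 12. The first factor $|\sdet(Q_{GF}|_{\sfK})|^{\frac12} = |\sdet(\iota_X|_{\im(\alpha\wedge)})|^{\frac12}$ is computed via the auxiliary metric, as $|\sdet(\iota_X(\alpha\wedge))|^{\frac14}$ on the relevant subspace, and since $\iota_X(\alpha\wedge) = \mathrm{id}$ wherever $\iota_X\alpha = \alpha(X) = 1$ (Reeb normalisation), this factor is $1$; this is identical to the corresponding step in the proof of Theorem \ref{thm:ruellezetazeroPF}, with no $\hbar$-dependence, so it simply does not contribute.

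The main obstacle is making rigorous sense of the flat-superdeterminant $\sdet(D_X|_{\sfL_X}) = \prod_k \det(\L_{X,k}+\hbar)^{(-1)^k}$ when $\hbar\in\mathbb C$ is arbitrary rather than $0$: one must know that $(\L_{X,k}+\hbar)$ has a well-defined flat determinant for all $\hbar$, which is the content of Remark \ref{entire_function} (these are entire functions of $\hbar$ with zeros at the Pollicott--Ruelle resonances), and that the flat-regularisation scheme announced in the introduction indeed reproduces these entire functions. I would invoke Theorem \ref{thm:Ruellezetadeterminant} and the cited results of \cite{DyatlovZworski_RuelleZeta} for this. A secondary subtlety is that $Q_X$ does not preserve smooth sections (as emphasised in the discussion of $W_X = \L_X^{-1}\d$), so one should be slightly careful that the manipulation $[\iota_X,\L_X^{-1}\d] = \mathrm{id}$ is applied on the correct domain $\mathfrak D(\L_X)\supset\Omega^\bullet(M,E)$ and that restricting to the gauge-fixing subspace $\ker(\iota_X)$ is compatible with passing to the anisotropic Sobolev completion $H_{sG}$ on which $\L_X+\hbar$ is invertible (by Assumption \ref{ass:noresonance} when $\hbar$ is not a resonance); away from the resonances this is routine, and at a resonance both sides of the claimed identity vanish (the Ruelle zeta has a zero, the superdeterminant has a zero), so the meromorphic identity extends by continuity.
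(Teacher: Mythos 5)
Your proposal is correct and follows essentially the same route as the paper: treat $S_{Q_X}$ as a quadratic (free) generalised $BF$ action, apply Definition \ref{def:partitionfunction} in the contact gauge with trivial Jacobian factor, observe that on $\sfL_X=\ker(\iota_X)=\im(\iota_X)$ the relevant operator reduces to $(\L_X+\hbar)\vert_{\im(\iota_X)}$ (your commutator computation $[\iota_X,Q_X]=\L_X+\hbar$ via the chain-contraction identity agrees with the paper's $\iota_X Q_X\vert_{\im(\iota_X)}=\L_X+\hbar$), and conclude via Theorem \ref{thm:Ruellezetadeterminant} with the same grading/shift bookkeeping deferred to \cite{hadfield_kandel_schiavina}. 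The extra care you take about domains, the non-resonance assumption, and entireness of the flat determinants is consistent with the paper's framework and does not change the argument.
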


\begin{proof}

Being quadratic, we can apply the definition of partition function (Definition \ref{def:partitionfunction}) to the functional $S$, where the operator $Q$ is now $\d+\h\L_X^{-1}\d$. Thus, for the gauge fixing condition $\sfL_X = \ker(\iota_X) = \im(\iota_X)$, the evaluation of the partition function yields\footnote{The Jacobian factor for the gauge fixing operator $\iota_X$ is $1$, see \cite{hadfield_kandel_schiavina,Schiavina_Stucker1}.}
$$Z(S,\sfL_X) \doteq \big|\sdet(\iota_X(\d+\h\L_X^{-1}\d)|_{\im(\iota_X)})\big| = \big|\sdet((\L_X+\h)|_{\im(\iota_X)})\big| = |\zeta_X(\hbar)|^{(-1)^m},$$
where $\iota_X\d\vert_{\im(\iota_X)} = \L_X\vert_{\im(\iota_X)}$ cancels with the $\L_X^{-1}$ in the second term.
\end{proof}

To gain more insight on this result, we will treat the perturbation functional as an interaction term instead in what follows. This is justified by the fact that it enters at a higher order correction in $\h$ w.r.t.\ the free $BF$ action $S_{\d}$.
In the perturbative framework, we look at the expectation value of the functional $\exp(iF)$ for the gauge fixing $\iota_X$. Dividing the result above by the value of the free partition function, as in \eqref{expectation_value_finite}, this immediately gives 
\begin{cor}\label{cor:expectationvaluefromPF}
\begin{equation}
\label{F_gaussian_expectation}
    \biggl\langle e^{i\digamma_X} \biggr\rangle_{\sfL_X} \doteq \frac{Z(S_{Q_X},\sfL_X)}{Z^{\text{free}}(S_{Q_X},\sfL_X)} = \frac{Z(S_{Q_X},\sfL_X)}{Z(S_{\d},\sfL_X)} = \frac{\big|\sdet((\L_X+\h)|_{\im(\iota_X)})\big|}{\big|\sdet(\L_X|_{\im(\iota_X)})\big|} = \left|\frac{\zeta_X(\h)}{\zeta_X(0)}\right|^{(-1)^m}.
\end{equation}
\end{cor}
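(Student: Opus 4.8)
The plan is to obtain the statement as an essentially immediate consequence of Theorem~\ref{thm:Result_PF}, Theorem~\ref{thm:ruellezetazeroPF}, and the definition of the normalised expectation value in~\eqref{expectation_value_finite}. First I would unwind the left-hand side: by definition the expectation value of $e^{i\digamma_X}$ with respect to the free, gauge-fixed, theory is the interacting partition function divided by the free one, so $\langle e^{i\digamma_X}\rangle_{\sfL_X}\doteq Z(S_{Q_X},\sfL_X)/Z^{\text{free}}(S_{Q_X},\sfL_X)$. Since the perturbing functional $\digamma_X$ is attached to the action at order $\hbar$, the free part $S_0$ in the decomposition $S_{Q_X}=S_0+I$ of Definition~\ref{def:partitionfunction} is exactly $S_{\d}$, whence $Z^{\text{free}}(S_{Q_X},\sfL_X)=Z(S_0,\sfL_X)=Z(S_{\d},\sfL_X)$; this is the second equality in the displayed chain.

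Next I would substitute the flat-superdeterminant evaluations already available. By (the proof of) Theorem~\ref{thm:Result_PF}, in the contact gauge $\sfL_X=\ker(\iota_X)=\im(\iota_X)$ the numerator equals $\bigl|\sdet\bigl((\L_X+\hbar)\vert_{\im(\iota_X)}\bigr)\bigr|$, the key point being the cancellation $\iota_X\d\vert_{\im(\iota_X)}=\L_X\vert_{\im(\iota_X)}$ against the $\L_X^{-1}$ sitting in $W_X$, together with the trivial Jacobian of $\iota_X$. Running the same computation at $\hbar=0$ --- equivalently, invoking Theorem~\ref{thm:ruellezetazeroPF} together with Theorem~\ref{thm:Ruellezetadeterminant} --- gives $Z(S_{\d},\sfL_X)=\bigl|\sdet\bigl(\L_X\vert_{\im(\iota_X)}\bigr)\bigr|$, producing the third equality. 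The last equality is then Theorem~\ref{thm:Ruellezetadeterminant}, which identifies $\sdet\bigl((\L_X+\lambda)\vert_{\ker(\iota_X)}\bigr)$ with $\zeta_\rho(\lambda)^{(-1)^m}$; evaluating at $\lambda=\hbar$ and $\lambda=0$ yields $\bigl|\zeta_X(\hbar)/\zeta_X(0)\bigr|^{(-1)^m}$.

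Since the corollary follows this directly, the only genuine point of care is that the ratio in~\eqref{expectation_value_finite} be well defined, i.e.\ that $\zeta_X(0)$ be finite and nonzero. This is precisely what Assumption~\ref{ass:noresonance} buys: absence of a Pollicott--Ruelle resonance at $0$ makes $\L_X\vert_{\im(\iota_X)}$ invertible on the anisotropic Sobolev spaces, so each flat determinant $\det(\L_{X,k})$ is nonzero by Remark~\ref{entire_function} (the $\det(\L_{X,k}+\lambda)$ being entire in $\lambda$ with zeros exactly at the resonances), hence $\sdet(\L_X\vert_{\im(\iota_X)})\ne 0$ and $\zeta_X(0)\ne 0$. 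I would also remark --- though it is not logically needed here, since the left-hand side of the corollary is \emph{defined} as a ratio of partition functions --- that because $\digamma_X$ is quadratic the same number is also produced by the connected Feynman-diagram expansion $\exp\bigl(\tfrac1\hbar\Gamma(P_{L,\infty},\hbar\digamma_X)(0)\bigr)$; verifying that this genuinely perturbative computation reproduces the present Gaussian answer is the business of Section~\ref{section_diagrams_ruelle}, and poses no further obstacle beyond tracking signs and the $(-1)^m$ factor.
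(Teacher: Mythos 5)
Your proposal is correct and follows the paper's own route: the corollary is obtained exactly as you say, by dividing the partition function of Theorem~\ref{thm:Result_PF} by the free partition function $Z(S_{\d},\sfL_X)$ evaluated via Theorem~\ref{thm:ruellezetazeroPF} (equivalently the same superdeterminant computation at $\hbar=0$), with Theorem~\ref{thm:Ruellezetadeterminant} supplying the identification with the Ruelle zeta ratio. Your added remarks on nonvanishing of $\zeta_X(0)$ under Assumption~\ref{ass:noresonance} and on the perturbative cross-check of Section~\ref{section_diagrams_ruelle} are consistent with, though not spelled out in, the paper's one-line derivation.
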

In the next section, we will show that the same result can be obtained from perturbation theory, and therefore without the use of Definition \ref{def:partitionfunction} for the partition function, as a generalisation of infinite-dimensional Gaussian integrals.

\subsection{Feynman diagrams for the contact gauge fixing}
\label{section_diagrams_ruelle}

In the rest of this section, we will treat the functional $\h \digamma_X$ as an interaction term for the free action $S_{\d}$, as in Section \ref{section_perturbation_theory}. This is worthwhile, since $D=\L_X$ is not elliptic, and thus a nontrivial adaptation of the perturbative quantisation setup of \cite{costello} is required.

The operators $Q, Q_{GF}$ and $D$, introduced for $BF$ theory in Section \ref{section_ruelle_zeta_perturbative}, can be represented as
\begin{equation*}
    Q = \begin{pmatrix} \d & 0 \\ 0 & \d \end{pmatrix}, \quad Q_{GF} = \begin{pmatrix} \iota_X & 0 \\ 0 & \iota_X \end{pmatrix}, \quad D = \begin{pmatrix} \L_X & 0 \\ 0 & \L_X \end{pmatrix},
\end{equation*}
with a copy of the operators acting on each component of the direct sum 
$$\F_{BF} = \Omega^\bullet(M,E)[1] \oplus \Omega^\bullet(M,E)[n-2].$$

\begin{rmk}[A comparison of gauge-fixing requirements]\label{rmk:costellospilittinganisotropic}
An operator $Q_{GF}$ is ``gauge fixing'' in the sense of \cite{costello} iff the space of fields splits as $\F_{BF} = \im(Q)\oplus\im(Q_{GF})$ (when $\ker(D)=\{0\}$). This definition is a priori different from Definition \ref{def:GFoperator}.  
Indeed, by assumption on the vector field $X$, the operator $D=\L_X$ has trivial kernel; by its injectivity, we have 
$$\im(\d)\cap\im(\iota_X) \subset \ker(\L_X) = \{0\}.$$ Furthermore, we can write a smooth section $\omega \in \Omega^\bullet(M,E)$, as
$$\omega = \L_X\L_X^{-1}\omega = \d\iota_X\L_X^{-1}\omega + \iota_X\d\L_X^{-1}\omega.$$
In this sense, $\omega$ can be written as the sum of two sections
$$\omega_1 = \d\iota_X\L_X^{-1}\omega \in \im(\d), \quad \omega_2 = \iota_X\d\L_X^{-1}\omega \in \im(\iota_X).$$
However, $\omega_1$ and $\omega_2$ are not necessarily smooth, belonging instead to anisotropic Sobolev spaces. So we cannot speak of an actual splitting of the space of \emph{smooth} forms $\Omega^\bullet(M,E)$ in this sense. Note, however, that $Q_{GF}$ is a gauge fixing operator in the sense of Definition \ref{def:GFoperator}, as shown in Proposition \ref{prop:GFoperator}. This leads to problems in applying some of the results of \cite{costello}. A Hodge-type decomposition, leading to a gauge-fixing compatible with the above considerations is explored in \cite[Section 4.2]{Dang_Riviere}. We will not give these issues any further consideration for now.
\end{rmk}

The heat kernel $K_t^{tot}$ of $D$ also consists of two components for each of the spaces making up $\F_{BF}$. Recall from Section \ref{EFT_BV} that the kernel is defined using the symplectic pairing $\Omega_{BF}$. For a section $\A\in\Omega^\bullet(M,E)[1]$, we have
$$e^{-tD}\A(x) = e^{-t\L_X}\A(x) = (K_t^{tot}*\A)(x) = \int_M K_t^{(1)}(x,y)\wedge\A(y),$$
where the integral over the $y$-coordinate is actually the pairing of a distribution with a smooth function and we suppressed the inner product in the fibers $E$ from our notation. Due to the definition of $*$ using $\Omega_{BF}$, the $y$ part of the kernel $K_t^{(1)}(x,y)$ is actually over the space $\Omega^\bullet(M,E)[n-2]$.
Similarly, for $\B\in\Omega^\bullet(M,E)[n-2]$ we have
$$e^{-tD}\B(x) = e^{-t\L_X}\B(x) = (K_t^{tot}*\B)(x) = \int_M K_t^{(2)}(x,y)\wedge\B(y),$$
Thus, the kernels $K_t^{(1)}$ and $K_t^{(2)}$ are identical as distributions, being equal to the kernel of the operator $\phi_{-t}^* = \exp(-t\L_X)$, which we just write as $K_t$. But with respect to the grading the part of $K_t^{(1)}$ acting on $k$-forms must be viewed as an element of 
$$\D(M,\wedge^kT^*M\otimes E)[1]\otimes\D(M,\wedge^{n-k}T^*M\otimes E)[n-2],$$
and similarly for $K_t^{(2)}$. The propagator of the theory is then obtained as
$$P^{tot}_{L_1,L_2} = \int_{L_1}^{L_2} (Q_{GF}\otimes1)K^{tot}_t\,dt,$$
and thus consists of two copies of
\begin{equation}
\label{propagator_scales}
    P_{L_1,L_2} = \int_{L_1}^{L_2} (\iota_X\otimes1)K_t\,dt.
\end{equation}
The two copies of this propagator, one connecting the field $\A$ to $\B$ and the other $\B$ to $\A$, will lead to a factor of two in all the Feynman diagrams we compute below.

The Lie derivative is not an elliptic operator and its heat kernel fails to be smooth. Thus, the Feynman diagrams for the contact gauge fixing must be evaluated using the extended notion of contraction in Definition \ref{def_feynman_contraction}. Our quadratic interaction functional $\digamma_X$ leads to particularly simple Feynman diagrams. At each order of $\hbar$ there are two Feynman diagrams, which are described in Section \ref{section_comp_diagrams}. In Appendix \ref{app:WFcomp} we show that theses diagrams are indeed well-defined according to Definition \ref{def_feynman_contraction}.

\begin{prop}
\label{diagrams_well_defined}
Let $P_{L_1,L_2}$ be the propagator for the contact gauge fixing and $\hbar\digamma_X$ the interaction introduced in \eqref{interaction_term}. Then for any  $0\le L_1 < L_2 < \infty$ and any connected Feynman graph $\gamma$ of the theory, we have
$(\WF(u_P)+\WF(u_I)) \cap \{0\} = \emptyset,$
and thus the weight
$$\Phi_\gamma(iP_{L_1,L_2},i\hbar\digamma_X)(\A,\B)$$
is well-defined in the sense of Definition \ref{def_feynman_contraction}.
\end{prop}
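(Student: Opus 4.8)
The plan is to exploit the fact that the interaction functional $\hbar\digamma_X$ is quadratic, so that every connected Feynman graph $\gamma$ of the theory is extremely simple: it has only bivalent vertices (each carrying a factor $i\hbar W_X = i\hbar\L_X^{-1}\d$) joined by propagator edges (each carrying $iP_{L_1,L_2}$), plus at most two tails carrying the external fields $\A,\B$. Hence $\gamma$ is either a ``chain'' (a path with two tails at its ends) or a ``loop'' (a cycle with no tails), and in either case $u_P$ and $u_I$ are, up to the fiberwise pairings in $E$, built out of finitely many tensor factors of the single distributional kernel of $\phi_{-t}^*$ (integrated over the relevant time intervals and possibly composed with $\iota_X$ or $\d$) together with the smooth external fields. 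The first step is therefore to record explicit formulas for $u_P$ and $u_I$ on $M^N$ — this is exactly the content of Equation \eqref{expression_uI_uP} referenced in the text — so that the wavefront set condition \eqref{WF_condition_Feynman} becomes a concrete statement about these kernels.

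The second, and central, step is to control $\WF$ of the building blocks. The kernel of $\exp(-t\L_X)=\phi_{-t}^*$ is a pull-back by a diffeomorphism, so its wavefront set is contained in the conormal to the graph of $\phi_{-t}$; integrating over $t\in[L_1,L_2]$ and applying the differential operators $\iota_X$, $\d$ (which cannot enlarge the wavefront set beyond what the flow already forces, plus the flow direction) yields a controlled wavefront set for the propagator $P_{L_1,L_2}$ lying in a cone dictated by the Anosov splitting $T^*M = E_0^*\oplus E_s^*\oplus E_u^*$. The hyperbolicity of $X$ is what makes this work: covectors are expanded in the $E_u^*$ direction and contracted in $E_s^*$ under the flow, so the relevant cones avoid the zero section and, crucially, the fiberwise sums appearing in \eqref{WF_condition_Feynman} at a vertex (where an ``incoming'' propagator covector must cancel an ``outgoing'' one) cannot vanish because the two contributions live in transverse cones. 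On the interaction side, $u_I$ is essentially a product of factors of the kernel of $W_X=\L_X^{-1}\d$, and since $\L_X^{-1}$ is (microlocally away from $0$) a Fourier integral / pseudodifferential-type operator associated to the flow — this is the content of the anisotropic Sobolev / Pollicott--Ruelle machinery of \cite{faure_sjostrand,DyatlovZworski_RuelleZeta} invoked in Remark \ref{rem:Anisotropic} — its kernel has wavefront set again controlled by the flow direction $E_0^*$ together with the dual stable/unstable directions, and in particular lies in the same good cone.

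The third step is the bookkeeping: for a chain diagram the external tails carry smooth $\A,\B$, so $u_P\cdot u_I$ involves only finitely many kernels each with wavefront set in a closed cone $\Lambda\subset T^*(M^N)\setminus\{0\}$ that is stable under the operations above and whose fiberwise negative $-\Lambda$ is disjoint from $\Lambda$ on the overlapping coordinates; this is exactly Hörmander's criterion for the product to exist, and it forces $(\WF(u_P)+\WF(u_I))\cap\{0_{M^N}\}=\emptyset$. For a loop diagram one traces a covector around the cycle and observes that, by the expansion/contraction dichotomy of the Poincaré map, it can only close up to the zero section if it was zero to begin with — which is excluded — so the same conclusion holds; Assumption \ref{ass:noresonance} (no resonance at $0$) is what guarantees $\L_X^{-1}$ exists and behaves well here. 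The main obstacle I expect is precisely this microlocal wavefront-set calculus for $\L_X^{-1}$ and for the integrated flow kernel $\int_{L_1}^{L_2}\phi_{-t}^*\,dt$: one must show the relevant cones are closed, avoid $0$, and are ``transverse at vertices'' uniformly, and this is where the detailed structure of the anisotropic Sobolev spaces and the Guillemin trace formula \eqref{flattrace} genuinely enter, rather than soft functional analysis. The actual diagram-by-diagram verification is then routine and is deferred to Appendix \ref{app:WFcomp}.
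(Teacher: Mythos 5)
Your overall setup (explicit wavefront bounds for the flow kernel and hence for $P_{L_1,L_2}$, the description of $\WF(\L_X^{-1})$ from \cite{DyatlovZworski_RuelleZeta}, and the chain/loop dichotomy of diagrams) is the same as the paper's, and your loop argument --- a covector transported around the cycle must be fixed by $e^{tH_p}$ with $t>0$, hence lies in $E_s^*\oplus E_u^*$ and is killed by the Anosov expansion/contraction estimates --- is essentially the argument given in Appendix \ref{app:WFcomp}. But for the chain diagrams there is a genuine gap: you assert that the fiberwise sums at a vertex ``cannot vanish because the two contributions live in transverse cones.'' This is false as a mechanism. The wavefront set of $\digamma_X$ contains the full conormal to the diagonal $N^*\Delta$ (and also $E_u^*\times E_s^*$), so an incoming propagator covector $(x,-\xi)$ is cancelled exactly by the $N^*\Delta$ component $(x,\xi,x,-\xi)$; such cancellations at individual vertices do occur and are not what the condition \eqref{WF_condition_Feynman} forbids. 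What actually prevents $\WF(u_P)+\WF(u_I)$ from meeting the zero section of $T^*(M^{N})$ is a propagation argument from the tails: since $\A,\B$ are smooth, the covectors at the two ends vanish, and the flow-out relations $(y_i,\eta_i)=e^{t_iH_p}(x_i,\xi_i)$, $(x_i,\xi_i)=e^{s_{i+1}H_p}(y_{i+1},\eta_{i+1})$ then force \emph{all} intermediate covectors to vanish, contradicting the fact that a wavefront set contains no point with every covector component zero (cf.\ equations \eqref{WF_consequence_eq}--\eqref{result_zero_section}). Your ``$-\Lambda$ disjoint from $\Lambda$'' formulation merely restates the condition to be proved rather than establishing it.

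A second, related omission is the case analysis over the three components $N^*\Delta\cup\Omega'_+\cup(E_u^*\times E_s^*)$ of $\WF(\digamma_X)$ when several factors of $u_I$ contribute their $E_u^*\times E_s^*$ part: there the propagation argument must be supplemented by the invariance of $E_s^*$ and $E_u^*$ under $e^{tH_p}$ and the fact that $E_s^*\cap E_u^*=\{0\}$, which you do not address for the chain (or the loop) diagrams. Finally, note that your last sentence defers ``the actual diagram-by-diagram verification'' to Appendix \ref{app:WFcomp} --- but that verification \emph{is} the content of the proposition, so as written the proposal is a plan whose decisive step is both missing and, where sketched, based on an incorrect transversality claim. (The Guillemin trace formula and Assumption \ref{ass:noresonance} beyond guaranteeing the existence of $\L_X^{-1}$ play no role in this particular proof.)
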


In the next section, we will compute the scale-$L$ effective interaction corresponding to $\digamma_X$:
$$i\h \digamma_X[L](\A,\B) = \Gamma(P_{0,L},\h \digamma_X)(\A,\B).$$
Note that by Proposition \ref{diagrams_well_defined} we can evaluate the Feynman diagrams directly for the propagator $P_{0,L}$. Following the philosophy outlined in Remark \ref{UV_IR_limits}, one could also first compute the diagrams for the propagator $P_{\epsilon,L}$ and then take the limit $\epsilon\to 0$. This leads to the same result, i.e.
$$\lim_{\epsilon\to 0}\Gamma(P_{\epsilon,L},\h \digamma_X)(\A,\B) = \Gamma(P_{0,L},\h \digamma_X)(\A,\B).$$
Thus, in contrast to the case where $D$ is a generalised Laplacian, our propagator $P_{\epsilon,L}$ behaves nicely as $\epsilon \to 0$, and we do not need to introduce counterterms, which could otherwise be necessary to compensate divergences at small length scales in diagrams that contain a loop. This can also be understood from the fact that, as we will see, all loop diagrams in our example involve the flat trace $\tr(\exp(-t\L_X))$, which vanishes for $t$ small enough. This is due to the fact that there exists a shortest non-trivial closed geodesic.

In the IR limit, however, our propagator is badly behaved. This is because the Lie Derivative operator $\L_X$ does not have strictly positive spectrum. Thus, we introduce a regularisation procedure, in order to regularise the propagators $P_{L_1,L_2}$ as $L_2$ goes to $\infty$. We do this by introducing a parameter $\lambda\in\mathbb{C}$ and defining
$$P_{L_1,L_2}^\lambda = \int_{L_1}^{L_2} e^{-\lambda t}(\iota_X\otimes 1)K_t\,dt .$$ 
To compute Feynman diagrams we first insert $P_{L_1,L_2}^\lambda$ for $\Re(\lambda) \gg 0$ as our propagator and then take the limit $\lambda\to 0$, applying analytic continuation in $\lambda$ as necessary. For any finite $L_2$, the limit\footnote{Certainly as a strong limit, and improvements can be made by introducing operator norms.} $\lambda\to 0$ just returns our original propagator $P_{L_1,L_2}$. It is only when setting $L_2 =\infty$ that analytic continuation becomes necessary.

\begin{rmk}
Note that the renormalization group equation \eqref{RGE} should be slightly modified in the presence of this regularization procedure. Denoting
$$i\h \digamma_X^\lambda[L](\A,\B) = \Gamma(P^\lambda_{0,L},\h \digamma_X)(\A,\B),$$
i.e. the scale-$L$ effective interaction in the presence of the regularizing parameter, the renormalization group equation should now read
$$i\h \digamma_X[L_2](\A,\B) = \lim_{\lambda\to 0}\Gamma(P^\lambda_{L_1,L_2},\h \digamma_X^\lambda[L_1])(\A,\B).$$
Once again, this distinction is only relevant when setting $L_2=\infty$.
\end{rmk}

\subsection{Computation of the scale-$L$ Effective Interaction}
\label{section_comp_diagrams}

We now compute the family of effective interactions arising from our functional $\digamma_X$. That is, we view $\h \digamma_X$ as the scale-$0$ interaction and use the perturbative expansion to compute the scale-$L$ interaction $\h \digamma_X[L]$ as in \eqref{renormalization}. Let us stress, once again, that we are treating a quadratic functional as an interaction because it is inserted at order 1 in $\hbar$.

As noted above, the propagator behaves well at small scales, and so the effective interaction at scale-$L$ can be obtained directly from the Feynman diagram expansion as
$$i\h \digamma_X[L](\A,\B) =\lim_{\lambda\to 0} \Gamma(P^\lambda_{0,L},\h \digamma_X)(\A,\B) = \Gamma(P_{0,L},\h \digamma_X)(\A,\B),$$
without the need for renormalisation (i.e.\ the addition of counterterms). Here we take $L<\infty$, so that we can ignore the regularization procedure $\lambda\to 0$.
To this end, we rewrite the interaction functional as
$$\digamma_X(\A,\B) = \int_M \langle\B\wedge W_X\A\rangle_E = \int_M \langle W_X^*\B\wedge\A\rangle_E \,,$$
where $W_X=\L_X^{-1}\d$ and $W_X^*$ is the adjoint of $W_X$ with respect to the symplectic pairing, which is just equal to $\d\L_X^{-1}$ up to a sign depending on the form degree.

The perturbative expansion 
$$\Gamma(P_{0,L},\hbar \digamma_X)(\A,\B) = \sum_{\gamma\, \mathrm{connected}} \frac{\h^{l(\gamma)}}{|\mathrm{Aut}(\gamma)|}\Phi_\gamma(iP_{0,L},i\h \digamma_X)(\A,\B)$$ is a formal power series in the parameter $\h$. In terms of Feynman diagrams, there is a factor of $\h$ attached to each vertex due to the interaction $i\h \digamma_X$, and also a factor of $\h$ for each loop in the diagram. Note that since our interaction is quadratic, we must attach exactly two propagators to each vertex. Thus, the possible connected Feynman diagrams are as follows. 

At order $\hbar$ we just get back our original interaction $i\h \digamma_X$. At order $\hbar^{N}$ for $N > 1$, there are two Feynman diagrams to consider. The first connects the external fields $\A$ and $\B$ by a chain of $N-1$ propagators, the diagram below on the left. The second is independent of the external fields and features $N-1$ propagators arranged in a loop, the diagram below on the right.

\begin{center}
\begin{tikzpicture}[baseline=(current bounding box.center)]
\begin{feynman}
\node [label=below:\(i\h \digamma_X\), dot] (a) at (0,0);
\node [label=below:\(i\h \digamma_X\), dot] (c) at (1.5,0);
\node [label=below:\(i\h \digamma_X\), dot] (d) at (3.2,0);
\node [label=below:\(i\h \digamma_X\), dot] (f) at (4.7,0);
\node (g) at (2.25,0);
\node (h) at (2.45,0);
\node [label=left:\(\B\!\!\!\)] (i) at (-1,0);
\node [label=right:\(\!\!\!\A\)] (j) at (5.7,0);
\diagram* {
(i) --
(a) -- [edge label=\(iP_{0,L}\)]
(c) -- [ghost] (g),
(h) -- [ghost]
(d) -- [edge label=\(iP_{0,L}\)]
(f) -- (j)
};
\end{feynman}
\end{tikzpicture}
\begin{tikzpicture}[baseline=(current bounding box.center)]
\begin{feynman}
\node [label=above:\(i\h \digamma_X\), dot] (a) at (0,2);
\node [label=right:\(i\h \digamma_X\), dot] (b) at (1.732,1);
\node [label=right:\(i\h \digamma_X\), dot] (c) at (1.732,-1);
\node [label=below:\(i\h \digamma_X\), dot] (d) at (0,-2);
\node [label=left:\(i\h \digamma_X\), dot] (e) at (-1.732,-1);
\node (f) at (-2,0);
\node (g) at (-1,1.732);
\node (z) at (-4,0);
\diagram* {
(g) -- [relative=true, out=12.5, in=167.5, ghost]
(a) -- [relative=true, out=25, in=155, edge label=\(iP_{0,L}\)]
(b) -- [relative=true, out=25, in=155, edge label=\(iP_{0,L}\)]
(c) -- [relative=true, out=25, in=155, edge label=\(iP_{0,L}\)]
(d) -- [relative=true, out=25, in=155, edge label=\(iP_{0,L}\)]
(e) -- [relative=true, out=12.5, in=167.5, ghost] (f)
};
\end{feynman}
\end{tikzpicture}
\end{center}

We will call the series of diagrams depending on the external fields the interaction term, $\Gamma_{int}(P_L,\h \digamma_X)$, and the series independent of the external fields the trace term, $\Gamma_{tr}(P_L,\h \digamma_X)$. In the following, we suppress the wedge product and the inner product over $E$ from our notation, since the expressions are already somewhat unwieldy as is. 
Evaluating the Feynman diagrams in the interaction term gives
\begin{equation}
\label{interaction1}
\begin{split}
    \Gamma_{int}(P_{0,L},\hbar \digamma_X)(\A,\B)\,\,& \\
    =\sum_{N=1}^\infty \hbar^N(-1)^{N-1}i &\int_{M_1}\cdots\int_{M_N}
    W_X^*\B(x_1)(1\otimes W_X^*)P_{0,L}(x_1,x_2)\cdots(1\otimes W_X^*)P_{0,L}(x_{N-1},x_N)\A(x_N) \\
    = \sum_{N=1}^\infty \hbar^N(-1)^{N-1}i &\int_0^L\,dt_1\cdots\int_0^L\,dt_{N-1} \int_{M_1}\cdots\int_{M_N} \\
    &W_X^*\B(x_1)(\iota_X\otimes W_X^*)K_{t_1}(x_1,x_2)\cdots(\iota_X\otimes W_X^*)K_{t_{N-1}}(x_{N-1},x_N)\A(x_N).
\end{split}
\end{equation}
The Feynman diagram above on the left has $N$ vertices and $N-1$ propagators, so the Feynman rules lead to a factor of $i^N\cdot i^{N-1} = (-1)^{N-1}i$. Also, the diagram has a symmetry factor of $\frac{1}{2}$, but this cancels with the factor of $2$ coming from the two copies of $P_{0,L}$ in the full propagator. We have decorated the integrals over the manifold $M$ with subscripts to denote which variable is being integrated. These integrals should actually be understood as distributional pairings.
Note that $(\iota_X\otimes W_X^*)K_t$ is precisely the kernel of the operator $\iota_X e^{-t\L_X} W_X$, since applying the adjoint $W_X^*$ to the second variable of $K_t$ corresponds to precomposing with the operator $W_X$. Thus, the integrals over the manifold in the above expression are just application of this operator, e.g.
$$\int_{M_N} (\iota_X\otimes W_X^*)K_{t_{N-1}}(x_{N-1},x_N)\A(x_N) = \big(\iota_X e^{-t_{N-1}\L_X} W_X\A \big)(x_{N-1}).$$
Applying this observation, rewriting $W_X=\L_X^{-1}\d$, and using the fact that $\iota_X$ and $\L_X$ commute, we get
\begin{equation}
\begin{split}
\label{interaction}
    \Gamma_{int}(P_{0,L},\h \digamma_X) =
    \sum_{N=1}^\infty \hbar^N(-1)^{N-1}i &\int_0^L\,dt_1\cdots\int_0^L\,dt_{N-1} \\
    &\int_M (\L_X^{-1}\d)^*\B\,\wedge\big( e^{-t_1\L_X} \L_X^{-1}\iota_X\d \cdots e^{-t_{N-1}\L_X} \L_X^{-1}\iota_X\d \A\big).
\end{split}
\end{equation}

We proceed similarly for the diagrams depicted on the right above, the trace term, and find
\begin{equation}
\label{trace1}
\begin{split}
    \Gamma_{tr}(P_{0,L},\h \digamma_X)(\A,\B)\quad& \\
    = \sum_{N=1}^\infty \h^{N+1}\frac{(-1)^N}{N} &\int_{M_1}\cdots\int_{M_N} \sum_{k=0}^n (-1)^{k+1} (1\otimes W_X^*)P_{0,L}^{(k)}(x_1,x_2)\cdots(1\otimes W_X^*)P_{0,L}^{(k)}(x_N,x_1) \\
    = \sum_{N=1}^\infty \h^{N+1}\frac{(-1)^N}{N} &\int_0^L\,dt_1\cdots\int_0^L\,dt_N \\
    &\sum_{k=0}^n (-1)^{k+1} \int_{M_1}\cdots\int_{M_N}
    (\iota_X\otimes W_X^*)K_{t_1}^{(k)}(x_1,x_2)\cdots(\iota_X\otimes W_X^*)K_{t_n}^{(k)}(x_N,x_1).
\end{split}
\end{equation}
The diagram at order $\h^{N+1}$ has $N$ vertices and $N$ propagators, with the vertices contributing a factor of $\h^N$ and an extra factor of $\h$ coming from the loop. The vertices and propagators each contribute a factor of $i^N$ giving rise to the sign $(-1)^N$. The symmetry of the diagram is that of an $N$-gon with a symmetry factor of $\frac{1}{2N}$, but again the $\frac{1}{2}$ gets canceled by the two copies of the propagator. We denoted by $P_{0,L}^{(k)}$ the part of the propagator acting on $k$-forms. The closed loop leads to a minus sign in odd degrees with respect to the grading of the fields. Recall that this grading is shifted with respect to the form degree, thus the sign of $(-1)^{k+1}$.
The integrals should again be interpreted in terms of distributional pairing. Noting once more the presence of the kernel for $\iota_X e^{-t\L_X} W_X= e^{-t\L_X}\iota_X\d\L_X^{-1}$, we recognise in the last line of equation \eqref{trace1} the flat trace of an operator, namely
\begin{equation}
\begin{split}
\label{trace}
    \Gamma_{tr}(P_{0,L},\h \digamma_X)  & \\
    =\sum_{N=1}^\infty &\h^{N+1}\frac{(-1)^N}{N} \int_0^L\,dt_1\cdots\int_0^L\,dt_N\, \\
    &\sum_{k=0}^n (-1)^{k+1}\, \tr\big(e^{-t_1\L_X} \L_X^{-1}\iota_X\d \cdots e^{-t_N\L_X} \L_X^{-1}\iota_X\d\big|_{\Omega^k(M,E)}\big)
\end{split}
\end{equation}

\begin{rmk}
In Appendix \ref{app:WFcomp}, we check the wavefront set condition, which is necessary to apply the Feynman rules with distributional propagators. For the Feynman diagrams arising from the interaction term this condition is in fact equivalent to the requirement that the composition of the operators in Equation \eqref{interaction} be well-defined. For the Feynman diagrams of the trace term the wavefront set condition in addition ensures that the flat trace in \eqref{trace} is well-defined.
\end{rmk}

\subsection{Computation of the Expectation Value}
\label{section_expectation}

The expectation value of the functional $\exp(iF)$ or, from a different perspective, the partition function of the interacting theory with the free partition function set to $1$, is computed in the perturbative framework as
$$\biggl\langle e^{i\digamma_X} \biggr\rangle_{\sfL_X} = e^{\frac{1}{\h}\Gamma(P_{0,\infty}, \h \digamma_X)(0,0)},$$
where $\sfL=\im(\iota_X)$ is the image of the gauge fixing operator entering the propagator, and we omitted the limit $\lambda\to 0$ and the regularisation $P_{0,\infty}^\lambda$, which is necessary here. The exponent on the right is the scale infinity interaction with the external fields set to $0$. As we saw in the previous section, the effective interaction consists of two terms
\begin{equation}
\label{scale_infty}
    \Gamma(P^\lambda_{0,\infty}, \h \digamma_X)(\A,\B) = \Gamma_{int}(P^\lambda_{0,\infty}, \h \digamma_X)(\A,\B) + \Gamma_{tr}(P^\lambda_{0,\infty}, \h \digamma_X).
\end{equation}
The trace term, $\Gamma_{tr}$, is a constant functional independent of the external fields $\A$ and $\B$. On the other hand, the external fields enter the interaction term, $\Gamma_{int}$, as in \eqref{interaction}. Thus, when we set $\A=\B=0$ in \eqref{scale_infty}, the interaction term vanishes. We are left with the trace term, where the scale $L$ is taken to infinity. The regularising parameter $\lambda \in \mathbb{C}$, discussed in Section \ref{section_diagrams_ruelle}, involves a factor of $e^{-t_i\lambda}$ for each of the propagators (cf. Equation \eqref{trace}). With this regularisation procedure, we have
\begin{equation}
\label{expectation1}
\begin{split}
    \Gamma(P_{0,\infty}, \h \digamma_X)(0,0) &= \Gamma_{tr}(P_{0,\infty},\h \digamma_X) \\
    &= \lim_{\lambda\to 0}\, \sum_{N=1}^\infty  \h^{N+1}\frac{(-1)^N}{N} \int_0^\infty\,dt_1\,e^{-t_1\lambda}\cdots\int_0^\infty\,dt_N\,e^{-t_N\lambda} \\
    &\hspace{7em} \sum_{k=0}^n (-1)^{k+1}\, \tr\big(e^{-t_1\L_X} (\L_X^{-1}\iota_X\d) \cdots e^{-t_N\L_X} (\L_X^{-1}\iota_X\d)\big|_{\Omega^k(M,E)}\big)
\end{split}
\end{equation}

Note the presence of the operator $\L_X^{-1}\iota_X\d$ in the above expression. Since $\iota_X\d = \L_X$ on $\im(\iota_X)$, this operator acts as the identity on $\im(\iota_X)$. In contrast, it annihilates the subspace $\im(\d)$. Thus, $\L_X^{-1}\iota_X\d$ acts like a projection onto the subspace $\im(\iota_X)$. The operator $\L_X^{-1}\iota_X\d$ is however \emph{not} a true projection operator, since it fails to map smooth functions to smooth functions, see the discussion in Section \ref{sec:perturbingfunctional}. However, inside the flat trace it plays the role of a projection operator, as the next lemma shows.

\begin{lemma}
\label{lemma_projection}
Let $B: \Omega^k(M,E) \to \D(M,\wedge^k T^*M \otimes E)$ be a continuous operator leaving the subspace $\im(\iota_X)\subset\Omega^k(M,E)$ invariant. Assume that the flat traces $\tr(B\L_X^{-1}\iota_X\d)$ and $\tr(\L_X^{-1}\iota_X\d B)$ are well-defined. Then
$$\tr(B\L_X^{-1}\iota_X\d) = \tr(B|_{\im(\iota_X)}).$$
\end{lemma}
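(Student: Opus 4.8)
The plan is to exploit the fact that $\L_X^{-1}\iota_X\d$ acts as the identity on $\im(\iota_X)$ and as zero on $\im(\d)$, and to transfer this ``projection'' behaviour through the flat trace by a cyclicity/invariance argument. The operator $B\L_X^{-1}\iota_X\d$ should be compared to $B$ composed with the ``almost-projection'' $\Pi\doteq\L_X^{-1}\iota_X\d$. Since $\iota_X\d + \d\iota_X = \L_X$ on the domain of $\L_X$, we have the decomposition $\mathrm{id} = \L_X^{-1}\iota_X\d + \L_X^{-1}\d\iota_X$, valid on smooth sections; write $\Pi' \doteq \L_X^{-1}\d\iota_X$, so $\Pi + \Pi' = \mathrm{id}$ on $\Omega^k(M,E)$ (as operators into the anisotropic space), and $\Pi$ is (formally) the projector onto $\im(\iota_X)$ along $\im(\d)$. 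Then $B\L_X^{-1}\iota_X\d = B\Pi = B - B\Pi'$, and it suffices to show $\tr(B\Pi') = \tr(B) - \tr(B|_{\im(\iota_X)})$, i.e.\ that $\tr(B\Pi')$ picks out the ``$\im(\d)$-part'' of $B$ — but this is circular, so instead I would proceed as follows.

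First, use the hypothesis that $\tr(\L_X^{-1}\iota_X\d\, B)$ is well-defined together with cyclicity of the flat trace (valid when both compositions make microlocal sense, cf.\ the wavefront discussion in Section~\ref{section_diagrams_ruelle} and \cite{DyatlovZworski_RuelleZeta}) to write $\tr(B\L_X^{-1}\iota_X\d) = \tr(\L_X^{-1}\iota_X\d\, B)$. Now $\L_X^{-1}\iota_X\d = \Pi$ is the identity on $\im(\iota_X)$ and vanishes on $\im(\d)$; since $B$ leaves $\im(\iota_X)$ invariant, the operator $\Pi B$ has the following structure with respect to the (a priori only distributional) splitting $\Omega^k(M,E) \hookrightarrow \im(\iota_X)\oplus\im(\d)$: it agrees with $B$ on $\im(\iota_X)$ and annihilates whatever $B$ sends into $\im(\d)$. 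Concretely, $\Pi B = \Pi B (\Pi + \Pi') = \Pi B\Pi + \Pi B\Pi'$; because $B$ preserves $\im(\iota_X)$, $B\Pi$ maps into $\im(\iota_X)$, hence $\Pi B\Pi = B\Pi$, while $\Pi B\Pi'$ maps into $\im(\iota_X)$ and — this is the crux — should contribute nothing to the flat trace because its ``diagonal'' lives on $\im(\d)$ where $\Pi'$ has been inserted on the right, i.e.\ it is of the form (projector onto $\im(\iota_X)$)$\circ(\cdots)\circ$(projector onto $\im(\d)$), whose flat trace vanishes. The cleanest way to run this last point rigorously is via the flat trace's expression in terms of the Schwartz kernel restricted to the diagonal: $\tr(\Pi B\Pi')$ is computed from the pullback of the kernel of $\Pi B\Pi'$ to the diagonal, and since $\Pi'$ kills $\im(\iota_X)$ while $\Pi$ produces something in $\im(\iota_X)$, the relevant fibrewise pairing (wedge-and-integrate against the identity, as in Definition~\ref{def_feynman_contraction}) vanishes pointwise.

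So the key steps, in order, are: (1) invoke cyclicity of the flat trace to move $\Pi = \L_X^{-1}\iota_X\d$ to the left of $B$; (2) insert $\mathrm{id} = \Pi + \Pi'$ on the right of $\Pi B$ and use $B$-invariance of $\im(\iota_X)$ to identify $\Pi B\Pi = B\Pi$; (3) show $\tr(\Pi B\Pi') = 0$ by a kernel-on-the-diagonal computation using that $\im(\Pi')$ and $\im(\Pi)$ are transverse in the fibre and that $\Pi$ is the identity on $\im(\iota_X)$; (4) conclude $\tr(B\Pi) = \tr(\Pi B) = \tr(B\Pi\Pi) $ plus the vanishing term, and finally observe that $B\Pi$ restricted to its image $\im(\iota_X)$ is just $B|_{\im(\iota_X)}$ with $\Pi$ acting as identity there, so its flat trace equals $\tr(B|_{\im(\iota_X)})$. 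The main obstacle is step (3): one must make precise that the flat trace of a composition of operators with a ``right projection'' onto a subspace transverse to the one the operator's image lands in actually vanishes, despite all these operators being non-smoothing and only defined on anisotropic Sobolev spaces — this requires care with the wavefront set of the kernels (controlled by the dynamics of $X$, as in Appendix~\ref{app:WFcomp}) to ensure the restriction to the diagonal is legitimate and that cyclicity in step (1) holds. An alternative, perhaps cleaner, route for (3) is to note that $\Pi' = \L_X^{-1}\d\iota_X = \d\L_X^{-1}\iota_X$ lands in $\im(\d)$, and $\Pi\circ(\text{anything in }\im(\d)) = 0$ identically as a distributional identity on $\mathfrak{D}(\L_X)$, so in fact $\Pi B\Pi' = 0$ as an operator (not merely zero trace), which sidesteps the kernel argument entirely — the only thing to check is that $B\Pi'$ indeed maps into (the distributional completion of) $\im(\d)$, which follows since $\Pi'$ does and $B$ is then applied, giving $\Pi B\Pi'=\Pi(B\Pi')$ with $B\Pi'$ landing in a space on which a further $\Pi'$-then-$\Pi$ kills it — this needs $B$ to also respect $\im(\d)$, which is not assumed, so one is back to the trace-vanishing argument. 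I would therefore present step (3) as the trace computation and flag it as the delicate point.
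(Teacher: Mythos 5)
Your plan gets the cyclicity idea right, but it has a genuine gap exactly at the point you flag as delicate, and it also leaves the right-hand side $\tr(B|_{\im(\iota_X)})$ without a rigorous meaning. The decomposition $\mathrm{id}=\Pi+\Pi'$ with $\Pi=\L_X^{-1}\iota_X\d$, $\Pi'=\L_X^{-1}\d\iota_X$ is a splitting along $\im(\iota_X)\oplus\im(\d)$, and $\im(\d)$ is \emph{not} the space of sections of any subbundle: it is a global, dynamical subspace, and $\Pi,\Pi'$ map smooth forms only into anisotropic Sobolev sections. Consequently your step (3) cannot be run as stated: the images of $\Pi$ and $\Pi'$ are not ``transverse in the fibre,'' a distributional form in $\im(\d)$ generally has a nonzero pointwise component in $\ker(\iota_X)$, so there is no pointwise vanishing of the Schwartz kernel of $\Pi B\Pi'$ on the diagonal, and the fibrewise-pairing argument collapses. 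The algebraic alternative ($\tr(\Pi B\Pi')=\tr(\Pi'\Pi B)=0$ since $\Pi'\Pi=\L_X^{-2}\d\iota_X\iota_X\d=0$) would need cyclicity and well-definedness of flat traces of compositions such as $\Pi B\Pi'$ and $B\Pi'$ — note $B$ is only defined on smooth sections, while $\Pi'$ outputs distributions — none of which is granted by the hypotheses (only $\tr(B\Pi)$ and $\tr(\Pi B)$ are assumed well-defined). Finally, in step (4) you read off $\tr(B\Pi)=\tr(B|_{\im(\iota_X)})$ informally, but the restricted flat trace (Definition \ref{det_restricted}) requires an explicit projection onto $\im(\iota_X)$ whose wavefront set lies in the conormal of the diagonal; your $\Pi$ is precisely not such an operator.

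The paper's proof supplies the missing ingredient that your argument lacks: the \emph{fiberwise} contact splitting $\Omega^k(M,E)=\Omega_0^k(M,E)\oplus\alpha\wedge\Omega_0^{k-1}(M,E)$, whose projector onto $\Omega_0^k(M,E)=\Omega^k(M,E)\cap\im(\iota_X)$ is the smooth bundle map $\iota_X\circ\alpha\wedge$. This gives $\tr(B|_{\im(\iota_X)})=\tr(B\circ\iota_X\circ\alpha\wedge)$ a precise meaning, and then one writes $B\L_X^{-1}\iota_X\d=B(\iota_X\circ\alpha\wedge)\L_X^{-1}\iota_X\d$ (since $\L_X^{-1}\iota_X\d$ lands in $\im(\iota_X)$), cycles to get $\tr(\L_X^{-1}\iota_X\d\,B(\iota_X\circ\alpha\wedge))$, and drops $\L_X^{-1}\iota_X\d$ because it is the identity on $\im(\iota_X)$, which $B(\iota_X\circ\alpha\wedge)$ maps into. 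The smooth projector both legitimises the restricted trace and lets you eliminate the non-smooth ``projection'' in one cyclicity step, avoiding the decomposition $\Pi+\Pi'$ and the problematic trace $\tr(\Pi B\Pi')$ altogether. To repair your write-up you would need to replace step (3) by this device (or prove an equivalent wavefront/vanishing statement, which your pointwise argument does not provide).
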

\begin{proof}
We have a fiberwise splitting of the vector bundle $\wedge^kT^*M$ of the form
$$\wedge^kT^*M = \wedge^kT_0^*M \oplus \alpha\wedge(\wedge^{k-1}T_0^*M),$$
where $T_0^*M$ are the cotangent vectors annihilating $X$. This leads to a splitting of the smooth $E$-valued $k$-forms into
$$\Omega^k(M,E) = \Omega_0^k(M,E)\oplus\alpha\wedge\Omega_0^{k-1}(M,E).$$
The projection onto $\Omega_0^k(M,E) = \Omega^k(M,E)\cap\im(\iota_X)$ with respect to this splitting can be written as
$$\prod\nolimits_{\im(\iota_X)} = \iota_X \circ \alpha\wedge.$$
Thus,
$$\tr(B|_{\im(\iota_X)}) = \tr(B\circ\iota_X\circ\alpha\wedge).$$
Since the image of $\L_X^{-1}\iota_X\d$ is contained in $\im(\iota_X)$, we have
$$B\L_X^{-1}\iota_X\d = B(\iota_X\circ\alpha\wedge)\L_X^{-1}\iota_X\d.$$
Using the cyclicity of the flat trace, Proposition \ref{cyclicity}, this shows that
$$\tr(B\L_X^{-1}\iota_X\d) = \tr(\L_X^{-1}\iota_X\d B (\iota_X\circ\alpha\wedge)).$$
Finally, since $\L_X^{-1}\iota_X\d$ acts as the identity on $\im(\iota_X)$, and $B$ leave $\mathrm{im}(\iota_X)$ invariant, we can drop this operator from the flat trace above to get
$$\tr(B\L_X^{-1}\iota_X\d) = \tr(B (\iota_X\circ\alpha\wedge)) = \tr(B|_{\im(\iota_X)})$$
\end{proof}

Using Lemma \ref{lemma_projection}, we see that the first occurrence of $\L_X^{-1}\iota_X\d$ in \eqref{expectation1} projects down to $\im(\iota_X)$, and since this subspace is left invariant by $\exp(-t\L_X)$ the further occurrences just act as the identity and can be ignored. Taking this into account, equation \eqref{expectation1} becomes
\begin{equation}
\label{expectation}
\begin{split}
    &\Gamma(P_{0,\infty}, \h \digamma_X)(0) = \\
    &\lim_{\lambda\to 0}\, \sum_{N=1}^\infty \h^{N+1}\frac{(-1)^N}{N}\sum_{k=0}^n (-1)^{k+1} \int_0^\infty\,dt_1\cdots\int_0^\infty\,dt_N\, e^{-(t_1+\cdots +t_N)\lambda}\, \tr\big(e^{-(t_1+\cdots+t_N)\L_X}\big|_{\Omega_0^k(M,E)}\big).
\end{split}
\end{equation}
Changing variables in the integral over $t_N$ to $t = t_1+\cdots+t_N$ we find
\begin{equation}
\begin{split}
\label{t_integral}
    &\int_0^\infty\,dt_1\cdots\int_0^\infty\,dt_N\,e^{-(t_1+\cdots t_N)\lambda}\, \tr\big(e^{-(t_1+\cdots+t_N)\L_X}\big|_{\Omega_0^k(M,E)}\big) \\
    &= \int_0^\infty\,dt\,\int_0^t\,dt_1\int_0^{t-t_1}\,dt_2 \cdots \int_0^{t-t_1-\cdots-t_{N-2}}\,dt_{N-1}\,e^{-t\lambda}\,\tr\big(e^{-t\L_X}\big|_{\Omega_0^k(M,E)}\big)
\end{split}
\end{equation}
Evaluating the integrals over $t_1,\dots,t_{N-1}$ we get a factor of $t^{N-1}/(N-1)!$. So equation \eqref{t_integral} becomes
\begin{equation*}
\begin{split}
    &\int_0^\infty\,dt\,\frac{t^{N-1}}{(N-1)!}\, e^{-t\lambda}\,\tr\big(e^{-t\L_X}\big|_{\Omega_0^k(M,E)}\big) \\
    &= \frac{(-1)^{N-1}}{(N-1)!}\,\frac{d^{N-1}}{d\lambda^{N-1}}\int_0^\infty\,dt\, e^{-t\lambda}\,\tr\big(e^{-t\L_X}\big|_{\Omega_0^k(M,E)}\big).
\end{split}
\end{equation*}
For $\Re(\lambda) \gg 0$, the integral featured above converges to
$$\int_0^\infty\,dt\, e^{-t\lambda}\,\tr\big(e^{-t\L_X}\big|_{\Omega_0^k(M,E)}\big) = \frac{d}{d\lambda}\log\det\big((\L_X+\lambda)|_{\Omega_0^k(M,E)}\big) ,$$
so for $\lambda$ in this domain of the complex plane \eqref{t_integral} becomes
$$\frac{(-1)^{N-1}}{(N-1)!}\,\frac{d^N}{d\lambda^N}\log\det\big((\L_X+\lambda)|_{\Omega_0^k(M,E)}\big) .$$
Since the flat determinant is a well-defined analytic function for all $\lambda\in\mathbb{C}$, see Remark \ref{entire_function}, this expression provides an analytic continuation of the integral in \eqref{t_integral}. Inserting this expression into \eqref{expectation}, and noting that the factors of $(-1)^N$ and $(-1)^{N-1}$ combine to an overall minus sign, we find
\begin{equation*}
\begin{split}
    \Gamma(P_{0,\infty}, \h \digamma_X)(0) &= \lim_{\lambda\to 0}\,\sum_{k=0}^n (-1)^k\, \sum_{N=1}^\infty\,\frac{\hbar^{N+1}}{N!} \frac{d^N}{d\lambda^N}\log\det\big(\L_X+\lambda)|_{\Omega_0^k(M,E)}\big) \\
    &= \h\sum_{k=0}^n (-1)^k\lim_{\lambda\to 0} \big(\log\det\big((\L_X+\lambda+\h)|_{\Omega_0^k(M,E)}\big)-\log\det\big((\L_X+\lambda)|_{\Omega_0^k(M,E)}\big)\big),
\end{split}
\end{equation*}
where we recognised the Taylor expansion of $\log\det\big((\L_X+\lambda+\h)|_{\Omega_0^k(M,E)}\big)$ with respect to $\h$ up to the missing zeroth order term, which we then added and subtracted again. Since this flat determinant is an entire function on the complex plane, the Taylor expansion in $\h$ converges in a neighborhood of $\lambda$ for each $\lambda\in\mathbb{C}$. 
Taking $\lambda$ to zero and exponentiating, we finally find
\begin{equation}
\label{expectation_ruelle}
    \biggl\langle e^{i\digamma_X} \biggr\rangle_{\sfL_X} = e^{\frac{1}{\h}\Gamma(P_{0,\infty}, \h \digamma_X)(0)} = \prod_{k=0}^n\biggl(\frac{\det\big((\L_X+\h)|_{\Omega_0^k(M,E)}\big)}{\det\big(\L_X|_{\Omega_0^k(M,E)}\big)}\biggr)^{(-1)^k}
\end{equation}

In terms of the Ruelle zeta function, see Section \ref{sec:Ruellezeta}, this result can be formulated as follows.

\begin{thm}\label{thm:result_expectation}
Consider Abelian $BF$ theory for a vector bundle $E$ over a $(2m+1)$-dimensional manifold $M$ with acyclic flat connection.
Let $X$ be a contact Anosov vector field with no Pollicott-Ruelle resonance at $0$, and let $\digamma_X$ be the functional
$$\digamma_X(\A,\B) = \int_M \langle\B\wedge\L_X^{-1}\d\A\rangle_E.$$
Then the expectation value of $\exp(iF)$ with respect to the gauge fixing $\sfL_X=\im(\iota_X)$ satisfies
\begin{equation}
    \Bigl\langle e^{i\digamma_X} \Bigr\rangle_{\sfL_X} = \left(\frac{\zeta_X(\h)}{\zeta_{X}(0)}\right)^{(-1)^m},
\end{equation}
where $\zeta_X$ is the Ruelle zeta function. 
\end{thm}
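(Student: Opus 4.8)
The bulk of the work has already been done in Section~\ref{section_expectation}: equation~\eqref{expectation_ruelle} exhibits the expectation value as the alternating product of flat determinants
\[
\Bigl\langle e^{i\digamma_X} \Bigr\rangle_{\sfL_X} = \prod_{k=0}^{n}\biggl(\frac{\det\bigl((\L_X+\h)|_{\Omega_0^k(M,E)}\bigr)}{\det\bigl(\L_X|_{\Omega_0^k(M,E)}\bigr)}\biggr)^{(-1)^k},
\qquad n = 2m+1,
\]
so the plan is simply to recognise the right-hand side in terms of the twisted Ruelle zeta function using Theorem~\ref{thm:Ruellezetadeterminant}, where the relevant operator is $\L_{X,k} = \L_X|_{\Omega^k(M,E)\cap\ker(\iota_X)}$.

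First I would note that the space $\Omega_0^k(M,E)$ appearing in \eqref{expectation_ruelle}, i.e.\ the $k$-forms lying in $\ker(\iota_X)=\im(\iota_X)$, is exactly the domain of $\L_{X,k}$, so that $\det(\L_X|_{\Omega_0^k(M,E)}) = \det(\L_{X,k})$ and likewise with $\L_X+\h$ in place of $\L_X$. Next, since a contact vector field is nowhere vanishing, the contraction $\iota_X$ is injective on forms of top degree on the $(2m+1)$-manifold $M$; hence $\Omega_0^{2m+1}(M,E)=\{0\}$, the $k=2m+1$ factor in \eqref{expectation_ruelle} is $1$, and the product over $k=0,\dots,n$ collapses to a product over $k=0,\dots,2m$, matching the range of Theorem~\ref{thm:Ruellezetadeterminant}.

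Then I would apply Theorem~\ref{thm:Ruellezetadeterminant} twice, once with $\lambda=\h$ and once with $\lambda=0$:
\begin{gather*}
\prod_{k=0}^{2m}\det(\L_{X,k}+\h)^{(-1)^k} = \zeta_\rho(\h)^{(-1)^m},\\
\prod_{k=0}^{2m}\det(\L_{X,k})^{(-1)^k} = \zeta_\rho(0)^{(-1)^m}.
\end{gather*}
By Assumption~\ref{ass:noresonance} there is no Pollicott--Ruelle resonance at $0$, so by Remark~\ref{entire_function} none of the entire functions $\det(\L_{X,k}+\lambda)$ vanishes at $\lambda=0$; thus $\zeta_\rho(0)\neq 0$ and both members above are nonzero holomorphic germs at $\h=0$. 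Dividing and combining the exponents $(-1)^m$ gives
\[
\Bigl\langle e^{i\digamma_X} \Bigr\rangle_{\sfL_X} = \left(\frac{\zeta_\rho(\h)}{\zeta_\rho(0)}\right)^{(-1)^m} = \left(\frac{\zeta_X(\h)}{\zeta_X(0)}\right)^{(-1)^m},
\]
since $\zeta_X\equiv\zeta_\rho$, which is the claim. The only remaining point is that the left-hand side is a priori a formal power series in $\h$ while the right-hand side is a convergent Taylor expansion at $\h=0$; the two agree because this is precisely how \eqref{expectation_ruelle} was obtained, namely by resumming the $\h$-series against the Taylor expansion of the entire function $\log\det\bigl((\L_X+\lambda+\h)|_{\Omega_0^k(M,E)}\bigr)$.

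Given \eqref{expectation_ruelle}, there is essentially no obstacle left: the theorem is a bookkeeping corollary of Theorem~\ref{thm:Ruellezetadeterminant}, the only items requiring minor care being the vanishing of the top-degree summand and the tracking of the sign $(-1)^m$. The genuine difficulty lies entirely upstream, in deriving \eqref{expectation_ruelle} itself: checking the wavefront-set condition so that each Feynman diagram is well-defined in the sense of Definition~\ref{def_feynman_contraction} (Proposition~\ref{diagrams_well_defined}); resumming the geometric series of loop diagrams into the flat trace $\tr(\exp(-t\L_X))$ by means of the cyclicity/projection mechanism of Lemma~\ref{lemma_projection}; and carrying out the analytic continuation in the IR-regularising parameter $\lambda$ needed to push $L_2$ to $\infty$. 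Those steps carry the real weight; the final identification with $\zeta_X$ is then immediate.
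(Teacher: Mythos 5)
Your proposal is correct and follows essentially the same route as the paper: the paper's proof of Theorem~\ref{thm:result_expectation} is precisely the computation of Section~\ref{section_expectation} culminating in \eqref{expectation_ruelle}, after which the theorem is stated as its reformulation via Theorem~\ref{thm:Ruellezetadeterminant}, exactly the identification you carry out. Your added bookkeeping (the collapse of the top-degree factor since $\iota_X$ is injective on top forms, the nonvanishing of $\det(\L_{X,k})$ at $\lambda=0$ by Assumption~\ref{ass:noresonance} and Remark~\ref{entire_function}, and the sign $(-1)^m$) is consistent with, and slightly more explicit than, what the paper leaves implicit.
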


\begin{rmk}[Ruelle zeta from perturbation theory]
As we had observed in Corollary \ref{cor:expectationvaluefromPF}, this expectation value can also be framed as the \emph{perturbative} evaluation of the interacting partition function, with interaction given by $\h \digamma_X$. Indeed, in the perturbation theory framework, the free partition function is generally considered to be a not-accessible quantity (and is just set to $1$), but adding the interaction term $\h \digamma_X$ leads to an interacting partition function that can be evaluated in the contact gauge as a power series in $\h$. This perturbed partition function provides information on the ratio $\zeta_X(\h) / \zeta_{X}(0)$.

However, the free partition function of $BF$ theory in the contact gauge \emph{is} accessible, when interpreted as an infinite dimensional oscillatory, Gaussian, integral as in Theorem \ref{thm:ruellezetazeroPF}. Using Theorem \ref{thm:Ruellezetadeterminant}), this provides information on the value at $0$ of the Ruelle zeta function.  Recalling that $\h\in\mathbb{C}$ is just an arbitrary complex number, we see that these two descriptions combined allow for the recovery of the full Ruelle zeta function (up to a phase) in quantum field theory.
\end{rmk}


\appendix

\section{Proof of Proposition \ref{diagrams_well_defined}}
\label{app:WFcomp}

We begin by giving an explicit description of the wavefront sets of the propagator and the interaction functional. Let $0\le L_1<L_2<\infty$ and consider the propagator
\begin{equation}
\label{propagator_eq}
    P_{L_1,L_2} = \int_{L_1}^{L_2}(\iota_X\otimes 1)K_t\,dt,
\end{equation}
where $K_t$ is the Schwartz kernel of $\exp(-t\L_X)$. Viewing $K_t$ for $t\in(L_1,L_2)$ as a distribution on $(L_1,L_2)\times M\times M$, we have by \cite[Appendix B]{DyatlovZworski_RuelleZeta}
\begin{equation}
    \WF(K_t) \subset \{(t,-\xi(X(x)),e^{tH_p}(x,\xi),x,-\xi)\,\,|\,\, t \in (L_1,L_2),\, (x,\xi)\in T^*M\},
\end{equation}
where
\begin{equation*}
    e^{tH_p}(x,\xi) = (\phi_t(x),d\phi_{-t}(\phi_t(x))^T\cdot\xi)
\end{equation*}
is the Hamiltonian flow of the principal symbol $p(x,\xi)=\xi(X(x))$. The propagator in \eqref{propagator_eq} can be written as
$$\pi_*((\iota_X\otimes 1)K_t),$$
using the push-forward of the map 
$$\pi: (t,x,y)\in (L_1,L_2)\times M\times M \to (x,y)\in M\times M.$$
See \cite[Section 6.1]{Brouder_Dang_Helein} for an introduction to the notion of push-forward of distributions. In virtue of the behavior of the wavefront set under push-forward \cite[Theorem 6.3]{Brouder_Dang_Helein}, we find that the wavefront set of the propagator can be bounded by
\begin{equation}
    \WF(P_{L_1,L_2}) \subset \{(e^{tH_p}(x,\xi),x,-\xi)\,\,|\,\, t \in (L_1,L_2),\, (x,\xi)\in T^*M,\, \xi(X(x))=0\}.
\end{equation}
Note that application of the bundle endomorphism $(\iota_X\otimes 1)$ to $K_t$ does not increase the wavefront set.

The wavefront set of the interaction functional $\digamma_X$, which is the Schwartz kernel of $W_X = \L_X^{-1}\d$, can be calculated from the explicit description of $\WF(\L_X^{-1})$ given in \cite[Proposition 3.3]{DyatlovZworski_RuelleZeta}, since composition with the differential operator $\d$ does not increase the wavefront set \cite[Theorem 8.2.14]{hormander1}. We have
\begin{equation}
\label{wavefront_resolvent}
    \WF(\digamma_X) \subset N^*\Delta \cup \Omega'_+ \cup (E_u^*\times E_s^*),
\end{equation}
where
$$N^*\Delta = \{(x,\xi,x,-\xi) \,\,|\,\, (x,\xi)\in T^*M\}$$
and
$$\Omega'_+ = \{(e^{tH_p}(x,\xi),x,-\xi)\,\,|\,\, t \in \R_+,\, (x,\xi)\in T^*M,\, \xi(X(x))=0\}.$$
$E_s^*$ and $E_u^*$ are the dual stable and unstable bundles.

The Feynman diagrams for our interaction are described in Section \ref{section_comp_diagrams}, where they are split into an interaction term and a trace term. In the notation of Definition \ref{def_feynman_contraction}, we will now compute the wavefront set of the distributions $u_P$ and $u_I$ for these diagrams, given as tensor products of propagators and interaction functionals, and check condition \eqref{WF_condition_Feynman}. 
For this one needs to know how the wavefront set behaves under tensor product, see \cite[Theorem 8.2.9]{hormander1}. For distributions $u,v$, we have
\begin{equation}
\label{tensor_WF}
    \WF(u\otimes v) \subset \WF(u)\times\WF(v) \,\cup\, \WF(u)\times(\supp(v)\times\{0\}) \,\cup\, (\supp(u)\times\{0\})\times\WF(v).
\end{equation}
We will ignore factors of $\hbar$ and $i$ in what follows.

For the diagram with $N$ vertices in the interaction term, we have
\begin{subequations}\label{expression_uI_uP}\begin{align}
u_I &= \digamma_X\otimes\cdots\otimes\digamma_X, \\
u_P &= \B\otimes P_{L_1,L_2}\otimes\cdots\otimes P_{L_1,L_2}\otimes\A,
\end{align}\end{subequations}
for smooth external fields $\A,\B$. Applying \eqref{tensor_WF} we find that
\begin{equation*}
    \WF(u_P) \subset \{(x_0,0,e^{t_1H_p}(x_1,\xi_1),x_1,-\xi_1,\dots,e^{t_{N-1}H_p}(x_{N-1},\xi_{N-1}),x_{N-1},-\xi_{N-1},x_N,0)\},
\end{equation*}
where $t_i\in(L_1,L_2)$, $x_i\in M$ and $\xi_i\in T_{x_i}^*M$ with $\xi_i(X(x))=0$. We can include the sets $\WF(u)\times(\supp(v)\times\{0\})$ and $(\supp(u)\times\{0\})\times\WF(v)$ arising from the application of \eqref{tensor_WF} by allowing $\xi_i=0$. However, note that $\xi_1=\dots=\xi_{N-1}=0$ is not included in $\WF(u_P)$.

Consider now that part of $\WF(u_I)$ arising from the $\Omega_+$ part of $\WF(\digamma_X)$. We denote this by $\mathcal{S}$. By \eqref{tensor_WF} we have
\begin{equation*}
    \mathcal{S} \subset \{(e^{s_1H_p}(y_1,\eta_1),y_1,-\eta_1,\dots,e^{s_{N}H_p}(y_{N},\eta_{N}),y_{N},-\eta_{N})\},
\end{equation*}
with $s_i>0$, $(y_i,\eta_i)\in T^*M$ and $\eta_i(X(x))=0$.

For $(\WF(u_P)+\mathcal{S})$ to intersect the zero section, there must be $s_i$, $t_i$, $(x_i,\xi_i)$, $(y_i,\eta_i)$ satisfying
\begin{equation}
\label{WF_consequence_eq}
\begin{split}
    (x_i,\xi_i) &= e^{s_{i+1}H_p}(y_{i+1},\eta_{i+1}), \quad \mathrm{for}\quad 1\le i\le N-1, \\
    (y_i,\eta_i) &= e^{t_{i}H_p}(x_{i},\xi_{i}), \quad \mathrm{for}\quad 1\le i\le N-1, \\
    y_N&=x_N, \quad \eta_N=0 \\
    x_0&=\phi_t(y_1), \quad \eta_1=0
\end{split}
\end{equation}
Together this implies that 
\begin{equation}
\label{result_zero_section}
    \xi_1=\dots=\xi_{N-1}=\eta_1=\dots=\eta_N=0,
\end{equation}
which is impossible as the wavefront sets of $u_P$ and $u_I$ do not intersect the zero section.

The set $\mathcal{S}$ came from only considering the $\Omega'_+$ part of the wavefront set for each factor of $\digamma_X$ in the tensor product $u_I$ when applying \eqref{tensor_WF}. If we replace $\Omega'_+$ with $N^*\Delta$ for the $j$-th factor in the tensor product, this changes the first equation in \eqref{WF_consequence_eq} to $(x_j,\xi_j)=(y_{j+1},\xi_{j+1})$, and the conclusion \eqref{result_zero_section} still holds. Similarly, if we replace any number of the $\Omega'_+$ by $N^*\Delta$.

Finally, we now also consider the set $E^*_u\times E^*_s$, when taking the tensor product in $u_I$. If $E_u^*\times E_s^*$ only enters in one factor, say the $j$-th factor, then the first equation in \eqref{WF_consequence_eq} is replaced by the condition 
\begin{equation}
\label{stable_unstable_condition}
    (x_{j-1},\xi_{j-1})\in E^*_u, \quad e^{t_jH_p}(x_j,\xi_j)\in E^*_s
\end{equation}
However, working from both sides, i.e. from $\eta_1=0$ and $\eta_N=0$, and using the remaining equations in \eqref{WF_consequence_eq} still leads to the conclusion \eqref{result_zero_section}. If we now allow $E^*_u\times E^*_s$ to appear in both the $j$-th and $(j+k)$-th factors of the tensor product, then \eqref{stable_unstable_condition} for $j$ and $j+k$, together with the fact that the Hamiltonian flow $\exp(tH_p)$ leaves $E^*_s$ and $E^*_u$ invariant, implies that for all $i$ with $j \le i \le j+k-1$, $\xi_i \in E^*_s\cap E^*_u = \{0\}$. This again recovers \eqref{result_zero_section}. We have now treated the full wavefront set of $u_P$, and conclude that $\WF(u_P)+\WF(u_I)$ does not intersect the zero section, so the Feynman diagrams of the interaction term are well-defined according to Definition \ref{def_feynman_contraction}.

Turning to the Feynman diagram of the trace term with $N$ vertices, we again have
$$u_I = \digamma_X\otimes\cdots\otimes\digamma_X.$$
The distribution $u_P$ is less simple to write as a tensor product, since it involves a permutation of the configuration space $M^{2N}$. Letting $\sigma$ be the diffeomorphism
$$\sigma: M^{2N}\to M^{2N}, \quad \sigma(z_1,\dots,z_{2N}) = (z_{2},\dots,z_{2N},z_1),$$
we have 
$$u_P = \sigma^*(P_{L_1,L_2}\otimes\cdots\otimes P_{L_1,L_2}).$$
The wavefront set of $u_P$ is bounded by
\begin{equation*}
    \WF(u_P) \subset \{(x_N,-\xi_N,e^{t_1H_p}(x_1,\xi_1),x_1,-\xi_1,\dots,e^{t_{N-1}H_p}(x_{N-1},\xi_{N-1}),x_{N-1},-\xi_{N-1},e^{t_NH_p}(x_N,\xi_N))\},
\end{equation*}
where $t_i\in(L_1,L_2)$, $x_i\in M$ and $\xi_i\in T_{x_i}^*M$ with $\xi_i(X(x))=0$.

Again considering first $\mathcal{S}$, the part of $\WF(u_I)$ arising from taking $\Omega'_+$ in each factor of $\digamma_X$, we find, similarly to \eqref{WF_consequence_eq}, that a point in $(\WF(u_P)+\mathcal{S})\cap\{0\}$ must satisfy
\begin{equation}
\label{WF_consequence_trace}
\begin{split}
    (x_i,\xi_i) &= e^{s_{i+1}H_p}(y_{i+1},\eta_{i+1}), \quad \mathrm{for}\quad 1\le i\le N-1, \\
    (y_i,\eta_i) &= e^{t_{i}H_p}(x_{i},\xi_{i}), \quad \mathrm{for}\quad 1\le i\le N, \\
    (x_N,\xi_N) &= e^{s_{1}H_p}(y_{1},\eta_{1})
\end{split}
\end{equation}
Writing $t=\sum_i (t_i+s_i)$ and $(x,\xi)=(x_1,\xi_1)$, these equations together imply $e^{tH_p}(x,\xi)=(x,\xi)$, i.e.
\begin{equation}
\label{fixed_point_eq}
    \phi_t(x) = x, \quad d\phi_{-t}(x)^T\cdot\xi =\xi
\end{equation}
Note that we also have $\xi(X(x))=0$, that is $\xi$ is an element of $E^*_s \oplus E^*_u$. Writing $\xi = (\xi_s,\xi_u)$, for the stable and unstable part, \eqref{fixed_point_eq} implies
\begin{equation*}
    (\xi_s,\xi_u) = (d\phi_{kt}(x)^T\cdot\xi_s,\, d\phi_{kt}(x)^T\cdot\xi_u), \quad \forall k\in \Z.
\end{equation*}
But by the Anosov property of the flow, we now have
$$\|\xi_s\| = \|d\phi_{kt}(x)^T\cdot\xi_s\| \le Ce^{-\theta kt}\|\xi_s\|, \quad \forall k\in\N,$$
for some $C,\theta >0$, and similarly for $\xi_u$ with $-k\in\N$. Since $t>0$, this forces $\xi=\xi_1$ to be $0$, and together with \eqref{WF_consequence_trace} this again leads to the conclusion \eqref{result_zero_section}.

Including the sets $N^*\Delta$ and $E_u^*\times E_s^*$ in the wavefront set of the tensor product $u_I$, can be dealt with exactly as for the interaction term discussed above. We conclude that $(\WF(u_P)+\WF(u_I))\cap\{0\}=\emptyset$, so the Feynman diagrams of the trace term are also well-defined.


\section{The Flat Determinant}
\label{section_flat_det}

In this appendix we provide a brief introduction to the flat trace and flat determinant. We follow the microlocal approach of \cite{DyatlovZworski_RuelleZeta}, see also \cite{baladi} for an approach using mollifiers. The flat determinant can be viewed as a generalisation of the more well-known zeta function regularised determinant, which is best suited for elliptic operators, whereas the flat determinant can deal with a larger class of operators.

\subsection{The Flat Trace}

Let $B:\C(M,F) \to \D(M,F)$ be a continuous operator mapping smooth sections of a vector bundle $F$ over a compact orientable manifold $M$ into distributional sections of $F$. We view the Schwartz kernel of $B$ as a distributional section\footnote{For non-orientable $M$ the top exterior product of $T^*M$ should be replaced with the density line bundle.}
\begin{equation}
\label{Schwartz_kernel}
    K_B \in \D(M\times M, F\boxtimes (F^*\otimes \wedge^n T^*M)),
\end{equation}
where $F^*$ is the dual bundle to $F$. The pull-back by the diagonal map, $\iota: x\in M \to (x,x)\in M\times M$, when it is well-defined, produces a distributional section 
$$\iota^*K_B \in \D(M,F\otimes F^*\otimes \wedge^n T^*M)$$
The flat trace is obtained from this expression by taking the fiberwise trace in $F$ and integrating over $M$. This procedure reproduces the usual trace when $B$ has smooth Schwartz kernel and is thus trace-class.

In the following we will also need a slightly extended notion of flat trace for semigroups of operators. We can view the Schwartz kernel of a strongly continuous semigroup $A_t: \C(M,F)\to \C(M,F)$ with $t\in \R_+$ as a distributional section on $\R_+\times M\times M$. We then use the extended diagonal map
$$\tilde{\iota}: (t,x) \in \R_+\times M \to (t,x,x) \in \R_+\times M\times M,$$
to pull back the Schwartz kernel, which, after integration over $M$, produces a distribution on $\R_+$.

\begin{defn}[Flat Trace]
\label{def_trace}

Let $B \colon \C(M,F) \to \D(M,F)$ be a continuous operator. Assume that its Schwartz kernel satisfies
\begin{equation}
\label{WFcondition}
\WF(K_B) \cap N^*\iota = \emptyset,
\end{equation}
where $N^*\iota$ is the set of conormals to $\iota$:
\begin{equation}
\label{conormal_diagonal}
    N^*\iota = \{(x,\xi,x,-\xi) \in T^*(M\times M) \,\mid\, (x,\xi) \in T^*M\}.
\end{equation}
Then we define the flat trace of $B$ as
\begin{equation}
\label{def_trace_formula}
    \tr(B) = \langle \mathrm{Tr}_F(\iota^*K_B), 1 \rangle.
\end{equation}

Let $A_t: \C(M,F)\to \C(M,F)$ be a strongly continuous semigroup and let $K_A$ be the Schwartz kernel of $A_t$, viewed as a distribution on $\R_+\times M\times M$. If
\begin{equation}
\label{WFcondition2}
    \WF(K_A) \cap N^*\tilde{\iota} = \emptyset,
\end{equation}
where
$$N^*\tilde{\iota} = \{(t,0,x,\xi,x,-\xi) \in T^*(M\times M) \,\mid\, t\in\R_+,\, (x,\xi) \in T^*M\}$$
we define the flat trace of $A_t$, $\tr(A_t) \in \D(\R_+)$, as
\begin{equation}
\label{def_trace_formula2}
    \pair{\tr(A_t)}{\chi} = \pair{\mathrm{Tr}_F(\tilde{\iota}^*K_A)}{\chi\otimes 1}, \quad \forall\,\chi\in\C_c(\R_+).
\end{equation}
\end{defn}

\begin{rmk}
\label{remark_schwartz_kernel}
In our applications, we will mostly consider degree preserving operators on the space of vector bundle valued forms $\Omega^\bullet(M,E)$. In this case the vector bundle $F$ in \eqref{Schwartz_kernel} is of the form $E\otimes\wedge^kT^*M$. Since $(\wedge^kT^*M)^*\otimes\wedge^nT^*M \cong \wedge^{n-k}T^*M$, we can view the Schwartz kernel as a distributional section
$$ K_B \in \D(M\times M, (E\otimes\wedge^k T^*M)\boxtimes (E^*\otimes \wedge^{n-k} T^*M)).$$
Under this identification, the action of $K_B$ on $\Omega^k(M,E)$ is given in terms of the exterior product, and the formula for the flat trace can be replaced by
$$\tr(B) = \pair{\mathrm{Tr}_E(\iota^*K_B)}{1},$$
where the pull-back already gives a top degree exterior form without the need for a fiberwise trace in $\wedge^kT^*M$
\end{rmk}

Note that the assumptions on the wavefront set of the Schwartz kernel, \eqref{WFcondition} and \eqref{WFcondition2}, are precisely the conditions for the respective pull-backs to be well-defined, see \cite[Theorem 8.2.4]{hormander1}. Identifying operators with their Schwartz kernel, \cite[Theorem 8.2.4]{hormander1} also implies:

\begin{prop}
\label{continuity}
For any closed conic set $\Gamma \subset T^*(M\times M)$ that satisfies ${\Gamma \cap N^*\iota = \emptyset}$, the flat trace is a linear functional $\tr \colon \D_\Gamma(M\times M) \to \mathbb{C}$, which is sequentially continuous with respect to the H\"ormander topology, see \cite[Definition 8.2.2]{hormander1}.

Similarly, for the case of semigroups, the flat trace is a linear map $\tr \colon \D_\Gamma(M\times M) \to \D(R_+)$, sequentially continuous with respect to the Hörmander topology for any closed conic set $\Gamma \subset T^*(\R_+\times M\times M)$ that satisfies ${\Gamma \cap N^*\tilde{\iota} = \emptyset}$.
\end{prop}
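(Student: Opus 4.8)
The plan is to exhibit the flat trace as a composition of three operations, each of which is linear and sequentially continuous with respect to the H\"ormander topology, and then to conclude, since a composition of sequentially continuous linear maps is again sequentially continuous and linear. Linearity of $\tr$ is immediate from linearity of each constituent, so the content is entirely in the continuity statement.

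Recall that $\tr(B) = \pair{\mathrm{Tr}_F(\iota^*K_B)}{1}$ factors as: \textbf{(i)} the pull-back $\iota^*\colon \D_\Gamma(M\times M)\to \D_{\iota^*\Gamma}(M)$ along the diagonal embedding $\iota\colon M\to M\times M$; \textbf{(ii)} the fibrewise trace $\mathrm{Tr}_F$, i.e.\ the map on distributional sections induced by the smooth bundle morphism $F\otimes (F^*\otimes\wedge^nT^*M)\to \wedge^nT^*M$; and \textbf{(iii)} the pairing with the constant function $1$, i.e.\ integration over the compact manifold $M$. The hypothesis $\Gamma\cap N^*\iota=\emptyset$ is exactly the statement that the set of normals of $\iota$, which coincides with the conormal set $N^*\iota$ displayed in \eqref{conormal_diagonal}, is disjoint from $\Gamma$; hence by \cite[Theorem 8.2.4]{hormander1} the pull-back in \textbf{(i)} is not merely well-defined but sequentially continuous in the H\"ormander topology, landing in $\D_{\iota^*\Gamma}(M)$ for the closed conic set $\iota^*\Gamma\subset T^*M$. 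Step \textbf{(ii)} is sequentially continuous for any closed conic set, since composition with a smooth bundle morphism does not enlarge the wavefront set and is continuous in the H\"ormander topology. For step \textbf{(iii)}, every distribution on the compact manifold $M$ is compactly supported, and convergence in $\D_{\Gamma'}(M)$ implies weak-$*$ convergence in $\D(M)$, so $u\mapsto \pair{u}{1}$ is sequentially continuous. Composing \textbf{(i)}--\textbf{(iii)} gives the first assertion. For the semigroup statement the argument is verbatim the same after replacing $\iota$ by the extended diagonal $\tilde{\iota}\colon \R_+\times M\to \R_+\times M\times M$, $\tilde{\iota}(t,x)=(t,x,x)$, whose set of normals is precisely $N^*\tilde{\iota}$: under $\Gamma\cap N^*\tilde{\iota}=\emptyset$, \cite[Theorem 8.2.4]{hormander1} yields sequential continuity of $\tilde{\iota}^*$ into $\D_{\tilde{\iota}^*\Gamma}(\R_+\times M)$, after which the fibrewise trace and then the push-forward (integration along the fibre) along the projection $\R_+\times M\to\R_+$---which, having compact and hence proper fibres, is a sequentially continuous operation on distributions in the H\"ormander topology---produce the desired sequentially continuous linear map $\tr\colon \D_\Gamma\to \D(\R_+)$.

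The only genuinely delicate point is one of bookkeeping: one must invoke not merely the well-definedness part of \cite[Theorem 8.2.4]{hormander1} (which is what is cited earlier in the text for the mere existence of the flat trace) but also its statement on sequential continuity of the pull-back with respect to the H\"ormander topology on $\D_\Gamma$, together with, in the semigroup case, the analogous continuity of push-forward along a submersion with compact fibres. Everything else---linearity, the behaviour of bundle morphisms on distributional sections, and the passage from H\"ormander-topology convergence to weak-$*$ convergence when pairing with a fixed smooth density on a compact manifold---is a routine concatenation of standard facts.
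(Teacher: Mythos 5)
Your proposal is correct and takes essentially the same route as the paper: the paper simply observes that the hypotheses $\Gamma\cap N^*\iota=\emptyset$ (resp.\ $\Gamma\cap N^*\tilde{\iota}=\emptyset$) are exactly those of H\"ormander's Theorem 8.2.4, whose sequential-continuity statement for the pull-back, together with the routine continuity of the fibrewise trace and of the pairing with $1$ (resp.\ with $\chi\otimes 1$), yields the proposition. Your write-up merely makes this factorisation explicit.
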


Note that by the density of $\C(M\times M)$ in $\D_\Gamma(M\times M)$, $\tr$ is in fact the unique continuous extension of the usual trace for smoothing operators to $\D_\Gamma(M\times M)$. In addition to being a continuous linear functional, the flat trace possesses another characteristic property of a trace, namely cyclicity, see \cite[Section 4.5]{VietDangDynamicalTorsion} for a proof.
\begin{prop}
\label{cyclicity}
Let $A,B: \C(M,F) \to \D(M,F)$ be two continuous operators. Assume that the compositions $A\circ B$ and $B\circ A$ are well-defined,
and both satisfy the wavefront condition in \eqref{WFcondition}. Then
$$\tr(A\circ B) = \tr(B\circ A).$$
\end{prop}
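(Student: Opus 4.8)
The identity is nothing but the classical cyclicity of the operator trace, transported to the microlocal setting, so the whole content lies in the extension. First I would record that for operators with \emph{smooth} Schwartz kernels the flat trace coincides with the ordinary trace-class trace, and for such operators $\tr(A\circ B)=\tr(B\circ A)$ reads, at the level of kernels,
\[
\int_{M\times M}\mathrm{Tr}_F\big(K_A(x,y)K_B(y,x)\big)\,dx\,dy \;=\; \int_{M\times M}\mathrm{Tr}_F\big(K_B(x,y)K_A(y,x)\big)\,dx\,dy ,
\]
which holds by invariance of the integral over $M\times M$ under the volume-preserving involution $(x,y)\mapsto(y,x)$, together with the symmetry of $\otimes$ and the cyclicity of the fibrewise trace $\mathrm{Tr}_F$. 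So the only issue is that in general none of $K_A$, $K_B$, $K_{A\circ B}$, $K_{B\circ A}$ are functions.

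To reach the general case I would use that, by Proposition \ref{continuity} and the remark following it, $\tr$ is the \emph{unique} sequentially continuous extension of the smoothing-operator trace to $\D_\Gamma(M\times M)$, for any closed conic $\Gamma$ with $\Gamma\cap N^*\iota=\emptyset$, and that smooth kernels (i.e.\ smoothing operators) are sequentially dense in $\D_\Gamma$ in the Hörmander topology. The plan is then: produce approximating sequences of smoothing operators $A_j\to A$, $B_j\to B$ for which \emph{also} $K_{A_j\circ B_j}\to K_{A\circ B}$ and $K_{B_j\circ A_j}\to K_{B\circ A}$ in the topology of a \emph{single} space $\D_\Gamma$ with $\Gamma\cap N^*\iota=\emptyset$; apply classical cyclicity $\tr(A_j\circ B_j)=\tr(B_j\circ A_j)$ for each $j$; and pass to the limit by sequential continuity of $\tr$. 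A natural choice is $A_j=R_{\delta_j}AR_{\delta_j}$, $B_j=R_{\delta_j}BR_{\delta_j}$, with $R_\delta$ a mollifier family (e.g.\ $R_\delta=e^{-\delta\Delta}$ for a Laplacian $\Delta$, or a Friedrichs mollifier) and $\delta_j\to 0$.

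The step I expect to be the main obstacle is exactly the last requirement of the plan: that the regularized compositions all lie in one fixed cone $\Gamma$ avoiding $N^*\iota$ and converge there. Weak convergence of all the kernels is automatic from $R_\delta\to\mathrm{id}$ and the continuity of the compositions defining $A\circ B$ and $B\circ A$; the difficulty is the \emph{uniform} wavefront bound. This is where the hypotheses enter: well-definedness of $A\circ B$ (resp.\ $B\circ A$) controls the cancellation of wavefront directions in the contracted variable, so that the composition is defined with a wavefront set governed by those of $K_A$ and $K_B$, while the wavefront condition \eqref{WFcondition} on $K_{A\circ B}$ (resp.\ $K_{B\circ A}$) keeps the limiting kernel off $N^*\iota$. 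Making this $\delta$-uniform needs care, since inserting the mollifiers a priori returns only the coarser composition estimate on $\WF(K_{A_j\circ B_j})$; one handles it either by choosing $R_\delta$ with enough microlocal structure that $\WF(K_{A_j\circ B_j})$ can be kept inside a fixed cone disjoint from $N^*\iota$, or — as is effectively done in the concrete computations of this paper, cf.\ Appendix \ref{app:WFcomp} — by exploiting the explicit descriptions of the relevant wavefront sets. The argument along these lines is carried out in detail in \cite[Section 4.5]{VietDangDynamicalTorsion}.
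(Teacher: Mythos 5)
First, a point of comparison: the paper does not actually prove this proposition --- it states it and refers to \cite[Section 4.5]{VietDangDynamicalTorsion} for the proof. So there is no in-paper argument to measure your sketch against, and the route you outline (classical cyclicity for operators with smooth kernels, density of smoothing operators in $\D_\Gamma(M\times M)$, and sequential continuity of $\tr$ from Proposition \ref{continuity}) is indeed the standard strategy for this kind of statement and is consistent with the cited source.

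As a proof, however, your proposal is incomplete, and you have yourself located exactly where: the uniform microlocal control of the mollified compositions. Two things deserve emphasis. (i) The hypotheses of the proposition only concern the compositions $A\circ B$ and $B\circ A$ --- their existence and the condition \eqref{WFcondition} on \emph{their} kernels --- and give no direct handle on $\WF(K_A)$ and $\WF(K_B)$ separately; yet your mollifier scheme needs wavefront information on the factors in order to place the whole families $K_{A_j\circ B_j}$, $K_{B_j\circ A_j}$ inside a single closed cone $\Gamma$ with $\Gamma\cap N^*\iota=\emptyset$ and to obtain convergence in the H\"ormander topology (weak convergence of kernels is not enough to invoke Proposition \ref{continuity}). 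In practice the cyclicity statement is proved in \cite{VietDangDynamicalTorsion} under explicit wavefront hypotheses on the factors, which are also what makes the compositions well-defined in the first place, and your argument should be run under those finer assumptions. (ii) Relatedly, the assertion that $R_\delta\to\mathrm{id}$ makes all four kernel families converge ``automatically'' conceals precisely this point: showing that $K_{R_\delta A R_\delta}\to K_A$ and, crucially, that the composed kernels converge in a fixed $\D_\Gamma(M\times M)$, is the analytic heart of the proof, not a formality. Since you defer exactly this step to \cite[Section 4.5]{VietDangDynamicalTorsion} --- the same citation the paper gives in lieu of a proof --- your write-up should be read as a correct plan plus citation (indeed more detailed than the paper's own treatment), not as a self-contained proof. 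If you want a self-contained argument, an alternative is to bypass approximation altogether and show directly, using H\"ormander's criteria for products and pull-backs of distributions, that under suitable wavefront hypotheses both $\tr(A\circ B)$ and $\tr(B\circ A)$ equal the pairing of $K_A$ with the kernel $K_B$ pulled back by the swap $(x,y)\mapsto(y,x)$ (with the fibrewise trace), an expression manifestly symmetric in $A$ and $B$; this is the distributional version of the kernel identity you wrote in the smooth case.
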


\begin{rmk}
\label{spaces}
We defined the flat trace for continuous operators from $\C(M)$ to $\D(M)$. However, if $X, Y$ are topological vector spaces with continuous inclusions $i: \C(M) \to X$ and $j: Y \to \D(M)$, then the definition automatically extends to continuous operators $B:X\to Y$. Indeed, any such $B$ induces an operator $j\circ B\circ i \colon \C(M) \to \D(M)$, whose flat trace can be computed if the wavefront condition holds. In this way, we can, for example, take the flat trace of bounded operators between $L^p$ spaces or Sobolev spaces and this flat trace will actually only depend on the restriction of the operator to smooth functions.
\end{rmk}

\subsection{The Flat Determinant}
\label{subsection_det}

We will define the flat determinant of an operator $B$ through the heat kernel of $B$, that is the semigroup $\exp(-tB)$ generated by $B$. The formula to have in mind is
\begin{equation}
\log(\det(B)) = -\frac{d}{ds}\Bigl(\frac{1}{\Gamma(s)}\int_0^\infty t^{s-1}e^{-\lambda t}\,\tr(e^{-tB}) \, dt\Bigr)\Bigr|_{s=0,\,\lambda=0},
\end{equation}
where $\Gamma(s)$ is the gamma function. This is akin to the characterization of the zeta regularized determinant via the Mellin transform. However, the above formula requires some care, as we will allow the flat trace of $\exp(-tB)$ to take values in the space of distributions, as in Definition \ref{def_trace}, and the evaluation at $s=0$, $\lambda=0$ may necessitate analytic continuation. Thus, we choose cutoff functions $\chi_N\in\C_c(\R_+)$ with $\chi_N=1$ on $[\frac{1}{N},N]$, and define the following function of two complex variables:
\begin{equation}
\label{F_B}
F_B(s,\lambda) = \lim_{N\to\infty}\pair{\tr(e^{-tB})}{\chi_N(t) t^{s-1} e^{-\lambda t}},
\end{equation}
Assuming for the moment that the limit is well-defined. If $F_B$ has an analytic continuation to $s=0$, $\lambda=0$, we define the flat determinant of $B$ as:
\begin{equation}
\label{det_F_B}
\det(B) = \exp\Bigr(-\frac{d}{ds} F_B(s,\lambda)\Big|_{s=0,\,\lambda=0}\,\Bigr).
\end{equation}
Note that letting $\lambda$ take on a non-zero value in the above formula provides a definition of $\det(B+\lambda)$.

In summary, we have the following definition.
\begin{defn}[Flat Determinant]
\label{def_det}
Let $B \colon \C(M,F) \to \D(M,F)$ satisfy the following conditions:
\begin{itemize}
    \item $B$ generates a strongly continuous semigroup of operators $e^{-tB} \colon \C(M,F) \to \C(M,F)$.
    \item The Schwartz kernel of $e^{-tB}$ viewed as a distribution on $\R_+\times M\times M$, satisfies the wavefront condition \eqref{WFcondition2}.
    \item For any smooth bounded function $f$ on $\R_+$, the sequence $\pair{\tr(e^{-tB})}{\chi_N(t)f(t) t^{s-1} e^{-\lambda t}}$ converges uniformly in $s$ and $\lambda$ as $N\to\infty$ for $\Re(s), \Re(\lambda)$ large enough.
    \item The function $F_B(s,\lambda)$ in \eqref{F_B} has an analytic continuation to $s=0,\,\lambda=0$.
\end{itemize}
Then the flat determinant of $B$ is well-defined and given by equation \eqref{det_F_B}.
\end{defn}

\begin{rmk}
Although the introduction of a bounded function in the third point above may seem ad hoc, this is just the analog of taking the absolute value, which is not available for distributions. That is, requiring convergence in the presence of a bounded function $f$, is akin to requiring absolute convergence of the integral $\int_0^\infty |\tr(e^{-tB})t^{s-1}e^{-\lambda t}|\,dt$. This ensures that the limit in \eqref{F_B} is independent of the choice of cutoff functions $\chi_N$.
\end{rmk}

\begin{rmk}
Note that we can think of $s$ as regularising the integral in $F_B(s,\lambda)$ at small $t$ and $\lambda$ as regularising the integral at large $t$. In the context of quantum field theory, where $\int_0^\infty\exp(-tB)\,dt$ has an interpretation as the propagator associated to $B$, the parameters $s$ and $\lambda$ can be thought of as regularization procedures for UV and IR divergences, respectively.
\end{rmk}

\begin{rmk}
If the operator $\exp(-tB)$ is smoothing, then the flat trace agrees with the usual trace and the flat determinant reduces to the zeta regularized determinant, see e.g.\ \cite[Proposition 6.8]{baladi}. This applies in particular when $B$ is a generalized Laplacian. One can thus think of the flat determinant as an extension of the zeta regularized determinant to operators whose heat kernel has non-trivial wavefront set.
\end{rmk}

\begin{rmk}
If $\tr(\exp(-tB))$ behaves well as $t\to 0$, so that the $N\to \infty$ limit in Definition \ref{def_det} is well-defined already for $s$ in a neighborhood of $s=0$, then the derivative at $s=0$ in equation \eqref{det_F_B} can be evaluated directly. This leads to the formula
\begin{equation*}
\log(\det(B+\lambda)) = -\int_0^\infty t^{-1}e^{-\lambda t}\,\tr(e^{-tB}) \, dt,
\end{equation*}
which should be interpreted in terms of distributional pairing and analytic continuation in $\lambda$ as in Definition \ref{def_det}. This is for instance the case when $B$ is the Lie derivative with respect to an Anosov vector field, as considered in the main part of this text.
\end{rmk}

In our applications, the function space $\C(M,F)$ has the additional structure of a graded vector space. In this context, the determinant of an operator must take into account the grading of the underlying vector space. This leads to the definition of the super determinant for degree-preserving operators. 

\begin{defn}[Flat Super Determinant]
Let $F=\bigoplus_k F^{(k)}$ be a graded vector bundle. Let $B: \C(M,F)\to\C(M,F)$ be a degree-preserving operator, with components $B^{(k)}: \C(M,F^{(k)})\to\C(M,F^{(k)})$ acting in degree $k$. Assuming that each component $B^{(k)}$ satisfies the requirements of Definition \ref{def_det}, we define the flat super determinant of $B$ as
\begin{equation}
    \sdet(B) = \prod_k \det(B^{(k)})^{(-1)^k}.
\end{equation}
Equivalently, we can obtain the flat super determinant directly form equation \eqref{det_F_B} if we replace the flat trace in \eqref{F_B} by the flat super trace:
\begin{equation}
    \str(e^{-tB}) = \sum_k (-1)^k \tr(e^{-tB^{(k)}}).
\end{equation}
\end{defn}

Finally, we require a notion of the flat determinant restricted to an invariant subspace $L$ of the operator $B$.
\begin{defn}
\label{det_restricted}
    Let $L\subset\C(M)$ be a closed subspace left invariant under the action of the semigroup $\exp(-tB)$. Furthermore, assume that there exists a continuous projection operator $\Pi_L: \C(M)\to\C(M)$ with image $L$ satisfying
    \begin{equation}
    \label{WF_projection}
        \WF(\Pi_L) \subset N^*\iota,
    \end{equation}
    see \eqref{conormal_diagonal}. Then we define the flat determinant of $B$ restricted to $L$ by replacing $\tr(\exp(-tB))$ in Definition \ref{def_det} with $\tr(\exp(-tB)\Pi_L)${, and we say that the flat determinant of $B$ \emph{restricts to} $L$.}
\end{defn}
\begin{rmk}
Note that condition \eqref{WF_projection} together with the behaviour of wavefront sets under composition of operators \cite[Theorem 8.2.14]{hormander1} imply that $\tr(\exp(-tB)\Pi_L)$ is well-defined if $\tr(\exp(-tB))$ is. Moreover, if $\Pi'_L$ is some other projection onto $L$ satisfying the requirements of Definition \ref{det_restricted}, then the cyclicity of the flat trace implies
\begin{equation*}
    \tr(\exp(-tB)\Pi_L) = \tr(\exp(-tB)\Pi'_L\Pi_L) = \tr(\Pi_L\exp(-tB)\Pi'_L) = \tr(\exp(-tB)\Pi'_L),
\end{equation*}
i.e. Definition \ref{det_restricted} is independent of the choice of projection operator.
\end{rmk}


\printbibliography

\end{document}